\documentclass[pdflatex,sn-mathphys-num]{sn-jnl}


\usepackage{amsmath}
\usepackage{amssymb}
\usepackage{graphicx}
\usepackage{amsthm}
\usepackage{subcaption}
\usepackage{a4wide}


\newtheorem{definition}{Definition}
\newtheorem{theorem}{Theorem}
\newtheorem{lemma}{Lemma}
\newtheorem{corollary}{Corollary}
\newtheorem{proposition}{Proposition}
\newtheorem{example}{Example}
\theoremstyle{remark}
\newtheorem{remark}{Remark}


\begin{document}

\title{A consistent SIR model on time scales with exact solution}

\author[1,2]{\fnm{M\'{a}rcia} \sur{Lemos-Silva}}
\email{marcialemos@ua.pt}

\author[3]{\fnm{Sandra} \sur{Vaz}}
\email{svaz@ubi.pt}

\author*[1,2]{\fnm{Delfim F. M.} \sur{Torres}}
\email{delfim@ua.pt}


\affil[1]{Center for Research and Development in Mathematics and Applications (CIDMA),\newline 
Department of Mathematics,
University of Aveiro,
3810-193 Aveiro, Portugal}

\affil[2]{Research Center in Exact Sciences (CICE), 
Faculty of Sciences and Technology,\newline
University of Cape Verde (Uni-CV),
7943-010 Praia, Cabo Verde}

\affil[3]{Center of Mathematics and Applications (CMA-UBI),
Department of Mathematics,
University of Beira Interior,
6201-001 Covilh\~{a}, Portugal}


\abstract{We propose a new dynamic SIR model that, 
in contrast with the available model on time scales, 
is biological relevant. For the new SIR model we obtain
an explicit solution, we prove the 
asymptotic stability of the extinction 
and disease-free equilibria, and deduce some
necessary conditions for the monotonic behavior of the infected population.
The new results are illustrated with several examples in the discrete, continuous,
and quantum settings.}

\keywords{Compartmental models, 
Nonlinear dynamic SIR systems, 
Exact solution,
Time scales.}

\pacs[MSC Classification]{34N05, 92D30.}

\maketitle


[\noindent
This is a preprint version of the paper published open access 
in \emph{Nonlinear Dynamics} at\\
\url{https://doi.org/10.1007/s11071-025-11758-0}]


\section{Introduction}
\label{sec1}

The modeling of infectious diseases has been a crucial component of public health 
strategy since 1927, when Kermack and McKendrick introduced the celebrated SIR 
model \cite{kermack:mckendrick}. Such model is based on the idea of dividing
the population into three different groups: susceptible, infected and removed. 
Nearly a century later, the importance of disease modeling persists, as it 
remains a valuable tool for predicting outbreaks, designing interventions 
and policy decisions \cite{MyID:468,TM1,Hattaf2013}.

In this paper, we consider a version of the susceptible-infected-removed model 
proposed by Bailey in \cite{bailey}. This model considers interactions between 
the three groups, governed by the disease's transmission and removal rates. 
Moreover, the removed compartment is introduced solely to ensure the population 
remains constant over time. Thus, susceptible individuals interact not
with the total population but with the combined sum of the susceptible and 
infected groups. Mathematically, such SIR model takes the form 
\begin{equation}
\label{sir:continuous}
\begin{cases}
x' = -\dfrac{b x y}{x+y}, \\
y' = \dfrac{b x y}{x+y} - cy, \\
z' = cy,
\end{cases}
\end{equation}
where $b, c \in \mathbb{R}^+_0$, $x(t_0) = x_0 > 0$, $y(t_0) = y_0 > 0$ 
and $z(t_0) = z_0 \geq 0$ for some $t_0 \in \mathbb{R}_0^+$. Moreover, 
$x, y, z : \mathbb{R} \rightarrow \mathbb{R}^+_0$. Here, $x$ denotes the
susceptible population, $y$ the infected population, and finally $z$ the
removed population. Observe that from \eqref{sir:continuous}, we find that 
$x'(t) + y'(t) + z'(t) = 0$. This implies that $x(t) + y(t) + z(t) = N$, 
where $N := x_0 + y_0 + z_0$ represents the constant population being 
analyzed. 

The SIR model is very simple, but it forms the foundation of all compartmental 
modeling in epidemiology \cite{MR4368314,s11071-024-09710-9,s11071-025-11006-5}. 
Despite its apparent simplicity, the model is nonlinear, and 
obtaining an analytical solution is far from trivial. In fact, most textbooks and papers 
either do not mention or are unaware that an exact solution in closed form exists under 
certain assumptions. This underlines the mathematical challenges involved and further 
motivates continued research on analytical approaches for such ``simple'' nonlinear systems
\cite{sir:ts,MR4492457,MR4692913}.

Over the years, the field of mathematical epidemiology has seen remarkable advancements, leading 
to more complex and realistic frameworks that allow for more accurate and 
robust predictions (see, e.g., \cite{caetano, gai, lopez, influenza}). 
Nonetheless, to the best of our knowledge, limited progress has been made
in modeling infectious diseases on a general time scale. 
Here we propose and investigate a SIR model within the framework of time scales, 
constructed in a mathematically consistent way. The theory of time scales offers 
a powerful tool for bridging discrete and continuous analysis. This is particularly relevant 
in the context of biological and epidemiological systems, where the timing of events, 
such as infections, recoveries, or interventions, may not follow a strictly continuous 
or discrete pattern. With this, the proposed model has the ability to reflect the mixed 
and often irregular nature of real-world data, enabling a more faithful representation 
of disease dynamics across a wide range of scenarios. 

In \cite{sir:ts}, a dynamic version of model \eqref{sir:continuous} is proposed with 
time-dependent parameters. Unfortunately, as shown in \cite{sir:discrete}, 
the dynamic model of \cite{sir:ts} fails to guarantee consistency in 
the discrete time domain, i.e., when $\mathbb{T} = h\mathbb{Z}$. 
Specifically, it has been proved that, even with non-negative initial 
conditions, the model of \cite{sir:ts} can yield negative results, which, 
while mathematically correct, lack biological relevance. 
The same problem occurs in the recent work \cite{MyID:561}.
In this paper, our main goal is to present a dynamic analogue 
of system~\eqref{sir:continuous}, 
also with time-dependent coefficients, that, 
unlike \cite{MyID:561,sir:ts}, ensures 
consistency for any arbitrary time scale. 
Furthermore, unlike many SIR models, 
our proposed model has an exact solution. 

This paper is organized as follows. In Section~\ref{sec2}, some fundamental
results of the theory of time scales are introduced. Our new, consistent,
and meaningful dynamic SIR model is then presented in Section~\ref{sec3}, 
along with its exact solution (Theorem~\ref{thm:6}). We proceed with the 
analysis of the asymptotic behavior of our model, illustrating the main results 
(Theorem~\ref{thm:7} and Theorem~\ref{thm:8}) with some applications 
in important time scales. Our last example analyzes
the particular dynamics of the infected population with the value
of the basic reproduction number in agreement with Theorem~\ref{thm:9}. 
We end with Section~\ref{sec:conc} of conclusions and future work.


\section{Time-scale fundamentals}
\label{sec2}

In this section, we introduce some fundamental concepts of time scales 
that are crucial for the sequel. For more details,
we refer the interested reader to the books 
\cite{book:ts, advanced:book:ts}.

A time scale $\mathbb{T}$ is any nonempty closed subset of the real numbers.

\begin{definition}[See \cite{book:ts}]
For $t \in \mathbb{T}$, the forward jump operator $\sigma : \mathbb{T} 
\rightarrow \mathbb{T}$ is
$$\sigma(t) := \inf\{s \in \mathbb{T} : s > t\},$$
while the backward jump operator $\rho : \mathbb{T} \rightarrow \mathbb{T}$
is $$\rho(t) := \sup\{s \in \mathbb{T} : s < t\}.$$
\end{definition}  

\begin{definition}[See \cite{book:ts}]
The graininess function $\mu(t): \mathbb{T} \rightarrow [0, \infty)$ is 
defined as
$$\mu(t) := \sigma(t) - t.$$ 
\end{definition}

\begin{definition}[See \cite{book:ts}]
A function $f: \mathbb{T} \rightarrow \mathbb{R}$ is called rd-continuous 
provided it is continuous at right-dense points in $\mathbb{T}$ and its 
left-sided limits exists for all left-dense points in $\mathbb{T}$. The set 
of rd-continuous functions $f: \mathbb{T} \rightarrow \mathbb{R}$ is denoted 
by $C_{rd} = C_{rd}(\mathbb{T}) = C_{rd}(\mathbb{T},\mathbb{R})$. 
\end{definition}

\begin{definition}[See \cite{book:ts}]
A function $f: \mathbb{T} \rightarrow \mathbb{R}$ is called regressive provided
$$1 + \mu(t)f(t) \neq 0\, \text{for all}\,\,\, t \in \mathbb{T}.$$
The set of regressive and rd-continuous functions is denoted by 
$$\mathcal{R} = \mathcal{R}(\mathbb{T}) = \mathcal{R}(\mathbb{T}, \mathbb{R}).$$
Moreover, $f \in \mathcal{R}$ is called positively regressive,
i.e., $f \in \mathcal{R}^+$, if 
$$1 + \mu(t)f(t) > 0\, \ \text{for all}\,\,\, t \in \mathbb{T}.$$ 
\end{definition}

If $t \in \mathbb{T}$ has a left-scattered maximum $m$, then 
$\mathbb{T}^\kappa = \mathbb{T} \setminus \{m\}$; otherwise 
$\mathbb{T}^\kappa = \mathbb{T}$.

\begin{definition}[See \cite{book:ts}]
\label{delta:derv}
Let $f: \mathbb{T} \rightarrow \mathbb{R}$ be a function and 
$t \in \mathbb{T}^\kappa$. Then $f^\Delta(t)$ denotes the delta (or Hilger) 
derivative and we define it as the number (provided it exists) for which given 
any $\varepsilon > 0$ there is a neighborhood $U$ of $t$, 
$U = (t - \delta, t + \delta) \cap \mathbb{T}$ for some $\delta > 0$, such that 
$$\lvert f(\sigma(t)) - f(s) - f^\Delta(t)[\sigma(t) - s]\rvert 
\leq \varepsilon \lvert \sigma(t) - s \rvert,$$
for all $s \in U$. Moreover, if $f^\Delta(t)$ exists for all 
$t \in \mathbb{T}^\kappa$, we say that $f$ is delta differentiable. 
\end{definition}

\begin{theorem}[See \cite{book:ts}]
\label{thm:1}
Let $p \in \mathcal{R}$ and $t_0 \in \mathbb{T}$. Then the regressive IVP problem 
of the form
$$y^\Delta = p(t)y,\quad y(t_0) = 1,$$
has the exponential function as its unique solution, denoted by $e_p(\cdot, t_0)$. 
\end{theorem}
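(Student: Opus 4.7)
The plan is to construct the solution explicitly via the cylinder transformation and then establish uniqueness by an integrating-factor style argument. The regressivity hypothesis $1 + \mu(t) p(t) \neq 0$ is precisely what makes both steps work.

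First, I would introduce the cylinder transformation $\xi_h$ defined by $\xi_h(z) := \frac{1}{h}\log(1+hz)$ for $h > 0$ (principal branch) and $\xi_0(z) := z$. Since $p \in \mathcal{R}$, the quantity $1 + \mu(\tau) p(\tau)$ is nonzero for every $\tau \in \mathbb{T}$, so $\xi_{\mu(\tau)}(p(\tau))$ is well defined and inherits rd-continuity from $p$; hence its delta integral exists. Define
$$e_p(t, t_0) := \exp\left(\int_{t_0}^{t} \xi_{\mu(\tau)}(p(\tau))\, \Delta \tau\right),$$
so that $e_p(t_0, t_0) = 1$ is immediate.

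To verify the dynamic equation, I would split into cases according to whether $t$ is right-dense or right-scattered. At a right-dense point, $\mu(t) = 0$ and $\xi_0(p(t)) = p(t)$, so the chain rule (valid at right-dense points) gives $e_p^\Delta(t) = p(t)\, e_p(t, t_0)$. At a right-scattered point, I would first establish the key identity
$$e_p(\sigma(t), t_0) = (1 + \mu(t) p(t))\, e_p(t, t_0),$$
which follows from $\mu(t)\,\xi_{\mu(t)}(p(t)) = \log(1 + \mu(t) p(t))$ and the fact that the delta integral from $t$ to $\sigma(t)$ equals $\mu(t)$ times the integrand at $t$. Then
$$e_p^\Delta(t) = \frac{e_p(\sigma(t), t_0) - e_p(t, t_0)}{\mu(t)} = p(t)\, e_p(t, t_0).$$

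For uniqueness, observe that $e_p(\cdot, t_0)$ never vanishes: the exponential factor is nonzero by construction, and at every right-scattered point the recurrence above multiplies by $1 + \mu(t) p(t) \neq 0$. Let $y$ be any solution of the IVP and set $\phi(t) := y(t)/e_p(t, t_0)$. Writing $y = \phi\cdot e_p$ and applying the time-scale product rule yields
$$p(t)\, y(t) = y^\Delta(t) = \phi^\Delta(t)\, e_p(\sigma(t), t_0) + \phi(t)\, p(t)\, e_p(t, t_0) = \phi^\Delta(t)\, e_p(\sigma(t), t_0) + p(t)\, y(t),$$
so $\phi^\Delta(t)\, e_p(\sigma(t), t_0) \equiv 0$. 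Non-vanishing of $e_p$ forces $\phi^\Delta \equiv 0$, hence $\phi$ is constant, and $\phi(t_0) = 1$ gives $y = e_p(\cdot, t_0)$. The main obstacle is the case dichotomy between right-dense and right-scattered points when differentiating $e_p$; the cylinder transformation is engineered precisely to collapse both cases into the single recurrence $e_p(\sigma(t), t_0) = (1+\mu(t)p(t))\, e_p(t, t_0)$, after which existence and uniqueness reduce to routine delta-calculus.
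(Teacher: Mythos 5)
The paper states this theorem as a background result quoted from the reference \cite{book:ts} and gives no proof of its own, so there is nothing internal to compare against. Your argument is correct and is essentially the standard one from that reference: you define $e_p$ via the cylinder transformation, verify the dynamic equation by splitting into right-dense and right-scattered cases (using $\int_t^{\sigma(t)}h(\tau)\,\Delta\tau=\mu(t)h(t)$ to get the recurrence $e_p(\sigma(t),t_0)=(1+\mu(t)p(t))e_p(t,t_0)$), and obtain uniqueness from the product rule applied to $\phi=y/e_p$ together with the non-vanishing of $e_p$ guaranteed by regressivity.
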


\begin{theorem}[See \cite{book:ts}]
\label{thm:2}
Assume $t_0 \in \mathbb{T}$. If $p(t) \in \mathcal{R}^+$, then $e_p(t,t_0) > 0$ 
for all $t \in \mathbb{T}$.
\end{theorem}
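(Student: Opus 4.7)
The plan is to show that the sign of $e_p(\cdot,t_0)$ cannot flip as $t$ moves along $\mathbb{T}$ away from $t_0$, by isolating a hypothetical first sign-change point and ruling it out via the two-part case analysis (left-scattered versus left-dense) characteristic of the time-scale calculus. My starting data are $e_p(t_0,t_0)=1>0$ together with the dynamic equation $e_p^{\Delta}(\cdot,t_0)=p\,e_p(\cdot,t_0)$ from Theorem~\ref{thm:1}, and the fact---a direct consequence of Definition~\ref{delta:derv} obtained by setting the test variable equal to the base point---that any delta-differentiable function is continuous on $\mathbb{T}^{\kappa}$ and satisfies $f(\sigma(t))=f(t)+\mu(t)\,f^{\Delta}(t)$.

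I would suppose for contradiction that $S:=\{t\in\mathbb{T}:\,t>t_0,\;e_p(t,t_0)\le 0\}$ is non-empty and set $t^{*}:=\inf S$. By definition of $t^{*}$ one has $e_p(\cdot,t_0)>0$ on $[t_0,t^{*})\cap\mathbb{T}$, and right-continuity at right-dense points together with the emptiness of $(t^{*},\sigma(t^{*}))\cap\mathbb{T}$ at right-scattered ones force $t^{*}\in S$, so $e_p(t^{*},t_0)\le 0$. If $t^{*}$ is left-scattered, setting $s:=\rho(t^{*})\in[t_0,t^{*})$ and combining the identity above with the dynamic equation gives
\[
 e_p(t^{*},t_0)=e_p(\sigma(s),t_0)=\bigl(1+\mu(s)p(s)\bigr)\,e_p(s,t_0)>0,
\]
because $1+\mu(s)p(s)>0$ by $p\in\mathcal{R}^{+}$ and $e_p(s,t_0)>0$; this contradicts $e_p(t^{*},t_0)\le 0$. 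If instead $t^{*}$ is left-dense, continuity of $e_p(\cdot,t_0)$ at $t^{*}$ combined with positivity on $[t_0,t^{*})\cap\mathbb{T}$ forces $e_p(t^{*},t_0)=0$; then $e_p(\cdot,t_0)$ and the identically zero function would both solve $y^{\Delta}=p(t)y$ with $y(t^{*})=0$, so invoking the uniqueness assertion of Theorem~\ref{thm:1} (backward in time) yields $e_p(t_0,t_0)=0$, contradicting $e_p(t_0,t_0)=1$. The case $t<t_0$ is symmetric, replacing the infimum by a supremum and using the inverted recurrence $e_p(s,t_0)=e_p(\sigma(s),t_0)/(1+\mu(s)p(s))$.

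The delicate point is the left-dense case: Theorem~\ref{thm:1} states uniqueness forward in time, whereas I need to propagate a zero of $e_p(\cdot,t_0)$ backward to $t_0$. This is legitimate because the stronger hypothesis $p\in\mathcal{R}^{+}$ implies $1+\mu\,p\ne 0$, so the one-step recurrence $y(\sigma(t))=(1+\mu(t)p(t))\,y(t)$ is invertible on scattered portions of $\mathbb{T}$, while on dense subintervals $y^{\Delta}=p\,y$ reduces to a classical linear ODE whose uniqueness theorem applies in both time directions; combining these two mechanisms yields the needed backward uniqueness and closes the proof.
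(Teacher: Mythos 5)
This theorem is quoted from Bohner--Peterson without proof, so there is no in-paper argument to match; the standard proof behind the citation represents the exponential via the cylinder transformation, $e_p(t,t_0)=\exp\bigl(\int_{t_0}^{t}\xi_{\mu(\tau)}(p(\tau))\,\Delta\tau\bigr)$ with $\xi_h(z)=\frac{1}{h}\operatorname{Log}(1+hz)$, and observes that $p\in\mathcal{R}^{+}$ makes the integrand real, so the exponential of a real number is positive. Your argument is a genuinely different, more elementary route: a first-sign-change (minimal counterexample) analysis that never invokes the cylinder transformation, only $e_p(t_0,t_0)=1$, the dynamic equation, the simple-useful formula $f^{\sigma}=f+\mu f^{\Delta}$, and continuity of differentiable functions. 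It is correct: the location of $t^{*}$ in $S$, the left-scattered step $e_p(t^{*},t_0)=(1+\mu(s)p(s))e_p(s,t_0)>0$, and the symmetric treatment of $t<t_0$ are all sound. What your approach buys is self-containedness at the level of the paper's stated toolkit; what it costs is the extra bookkeeping of the scattered/dense case split, plus one soft spot: the backward-uniqueness justification in the left-dense case is only sketched. That step can be made airtight more cheaply in two ways already available in the paper: either note that Theorem~\ref{thm:1}'s uniqueness assertion is global on $\mathbb{T}$ (not merely forward), so $y(t^{*})=0$ forces $y\equiv 0$ and contradicts $y(t_0)=1$; or invoke Theorem~\ref{thm:3}, whose identity $e_p(t,s)=1/e_p(s,t)$ already shows $e_p(t^{*},t_0)\neq 0$ for regressive $p$, ruling out the zero outright.
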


Now we recall the properties of the exponential function that are relevant
for the subsequent analysis. More known properties of the exponential function 
can be found in \cite{book:ts}.

\begin{theorem}[See \cite{book:ts}]
\label{thm:3}
If $p,\,q \in \mathcal{R}$, then
\begin{itemize}
\item $e_0(t,s) \equiv 1$ and $e_p(t,t) \equiv 1$;
\item $e_p(t,s) = \frac{1}{e_p(s,t)} = e_{\ominus p}(s,t)$;
\item $e_p(t,s)e_q(t,s) = e_{p \oplus q}(t,s)$.
\end{itemize}
\end{theorem}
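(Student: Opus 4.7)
The plan is to take the product law as the central workhorse and derive the remaining identities from it, relying throughout on the uniqueness part of Theorem~\ref{thm:1}. The first two trivial items are immediate: $e_0(t,s) \equiv 1$ because the constant function $y \equiv 1$ solves the IVP $y^\Delta = 0$, $y(s) = 1$, and $e_p(t,t) \equiv 1$ is the built-in initial condition of the exponential.

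For the product law, I would fix $s \in \mathbb{T}$, set $f(t) := e_p(t,s)\, e_q(t,s)$, and apply the time-scale product rule $(uv)^\Delta = u^\Delta v + u^\sigma\, v^\Delta$. Together with the standard identity $e_p^\sigma = (1 + \mu p)\, e_p$ (which comes from $e_p^\sigma = e_p + \mu\, e_p^\Delta = (1 + \mu p)\, e_p$), a short computation gives
\[
f^\Delta = \bigl(p + q + \mu p q\bigr)\, f = (p \oplus q)\, f,
\qquad f(s) = 1.
\]
Since $p, q \in \mathcal{R}$ implies $p \oplus q \in \mathcal{R}$, uniqueness in Theorem~\ref{thm:1} forces $f = e_{p \oplus q}(\cdot, s)$, which is the third bullet.

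The reciprocal and $\ominus$-identities then follow by specialization. Applying the product law with $q = \ominus p$ and noting $p \oplus (\ominus p) \equiv 0$ yields $e_p(t,s)\, e_{\ominus p}(t,s) = e_0(t,s) \equiv 1$, hence $e_{\ominus p}(t,s) = 1/e_p(t,s)$; relabeling the arguments gives $e_{\ominus p}(s,t) = 1/e_p(s,t)$. For the remaining equality $e_p(t,s) = 1/e_p(s,t)$, I would first establish the semigroup identity $e_p(t,r) = e_p(t,s)\, e_p(s,r)$ by observing that, as functions of $t$, both sides solve $y^\Delta = p\, y$ with $y(s) = e_p(s,r)$; uniqueness in Theorem~\ref{thm:1} closes the argument, and specializing to $t = r$ together with $e_p(r,r) = 1$ yields the desired reciprocal relation.

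The main bookkeeping obstacle is verifying the regressivity of the auxiliary coefficients $p \oplus q$ and $\ominus p$ so that Theorem~\ref{thm:1} may legitimately be invoked on the auxiliary problems; this rests on the algebraic identities $1 + \mu(p \oplus q) = (1 + \mu p)(1 + \mu q)$ and $1 + \mu(\ominus p) = 1/(1 + \mu p)$, both of which are nonzero under the standing hypothesis $p, q \in \mathcal{R}$.
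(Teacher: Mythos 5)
The paper states this result without proof, quoting it directly from Bohner--Peterson \cite{book:ts}, so there is no in-paper argument to compare against. Your proof is correct and is essentially the standard textbook derivation: everything reduced to the uniqueness clause of Theorem~\ref{thm:1}, with the product rule $(uv)^\Delta = u^\Delta v + u^\sigma v^\Delta$ and the identity $e_p^\sigma = (1+\mu p)e_p$ doing the computational work, and the regressivity identities $1+\mu(p\oplus q)=(1+\mu p)(1+\mu q)$ and $p\oplus(\ominus p)=0$ justifying each invocation. The one point to tighten is the semigroup step: Theorem~\ref{thm:1} as stated only asserts uniqueness for the normalized initial condition $y(t_0)=1$, whereas you apply it to the IVP with $y(s)=e_p(s,r)$. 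This is easily repaired with what you already have: your identity $e_p(t,s)\,e_{\ominus p}(t,s)=1$ shows $e_p$ never vanishes for $p\in\mathcal{R}$, so you may divide both candidate solutions by $e_p(s,r)$ and apply Theorem~\ref{thm:1} to the normalized problem at $t_0=s$. With that remark added, the argument is complete.
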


\begin{theorem}[See \cite{book:ts}]
\label{thm:4}
If $p \in \mathcal{R}$ and $a,\,b,\,c \in \mathbb{T}$, then 
$$\int_{a}^{b} p(t)e_p(c,\sigma(t))\Delta(t) = e_p(c,a) - e_p(c,b),$$
and
$$\int_{a}^{b} p(t)e_p(t,c) \Delta(t) = e_p(b,c) - e_p(a,c).$$
\end{theorem}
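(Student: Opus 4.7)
The plan is to obtain both identities as direct consequences of Theorem~\ref{thm:1} together with the fundamental theorem of calculus for the delta integral. By Theorem~\ref{thm:1}, the function $y(t)=e_p(t,c)$ is the unique solution of $y^\Delta=p(t)y$, so $p(t)\,e_p(t,c)=[e_p(t,c)]^\Delta$. Integrating in $t$ from $a$ to $b$ and invoking the antiderivative form of the delta integral yields the second identity in one step; no extra work is needed beyond recognising the integrand as an exact derivative.

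For the first identity, I would aim to rewrite the integrand $p(t)\,e_p(c,\sigma(t))$ as (minus) the delta derivative in $t$ of $e_p(c,t)$, and then apply the fundamental theorem again. Using the reciprocal property of Theorem~\ref{thm:3}, $e_p(c,t)=1/e_p(t,c)$, the task reduces to computing $[1/f]^\Delta$ where $f(t):=e_p(t,c)$. I would invoke the standard time-scale quotient/reciprocal rule $(1/f)^\Delta=-f^\Delta/(f\,f^\sigma)$, substitute $f^\Delta=pf$, and use the simple jump identity $f^\sigma=f\,(1+\mu p)$. The product $f\cdot f^\sigma$ in the denominator cancels against one factor of $f$ in the numerator, leaving $[e_p(c,t)]^\Delta=-p(t)/f^\sigma(t)=-p(t)\,e_p(c,\sigma(t))$. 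The fundamental theorem then gives $\int_a^b p(t)\,e_p(c,\sigma(t))\,\Delta t=-(e_p(c,b)-e_p(c,a))=e_p(c,a)-e_p(c,b)$, which is exactly the first identity.

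The only genuinely delicate point, and therefore the main technical obstacle, is justifying that neither $f(t)$ nor $f^\sigma(t)$ vanishes, so that the reciprocal rule may be applied legitimately. This is where the regressivity hypothesis $p\in\mathcal{R}$ enters decisively: the condition $1+\mu(t)p(t)\neq 0$ ensures that $f^\sigma(t)=f(t)(1+\mu(t)p(t))$ remains nonzero whenever $f(t)$ does, and the well-known nonvanishing property of $e_p(\cdot,c)$ under regressivity then propagates $f(t)\neq 0$ throughout $\mathbb{T}$. Note that one does not need the stronger hypothesis $p\in\mathcal{R}^+$ of Theorem~\ref{thm:2}, since mere nonvanishing, not positivity, is what the reciprocal rule requires. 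Once this regressivity check is in place, both identities follow immediately and symmetrically by the above antiderivative arguments.
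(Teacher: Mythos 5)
Your argument is correct; note that the paper itself states this result without proof, citing Bohner--Peterson, and your derivation (recognising $p(t)e_p(t,c)$ as $[e_p(t,c)]^\Delta$ and $p(t)e_p(c,\sigma(t))$ as $-[e_p(c,t)]^\Delta$ via the reciprocal rule, then applying the fundamental theorem of delta calculus) is precisely the standard proof given there. Your observation that only $p\in\mathcal{R}$, not $p\in\mathcal{R}^+$, is needed for the nonvanishing of $e_p(\cdot,c)$ is also correct, since $e_p(t,c)\,e_{\ominus p}(t,c)=1$ already forbids zeros under mere regressivity.
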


\begin{theorem}[Variation of constants \cite{advanced:book:ts}]
\label{thm:5}
Suppose $p \in \mathcal{R}$ and $f \in C_{rd}$. Let $t_0 \in \mathbb{T}$ 
and $y_0 \in\mathbb{R}$. The unique solution of the IVP 
$$y^\Delta = p(t)y + f(t),\quad y(t_0) = t_0,$$
is given by
\begin{equation*}
\begin{aligned}
y(t) &= e_p(t,t_0)y_0 + \int_{t_0}^{t} e_p(t, \sigma(\tau))f(\tau)\Delta\tau.
\end{aligned}
\end{equation*}
Moreover, the unique solution of the IVP
$$y^\Delta = -p(t)y^\sigma + f(t),\quad y(t_0) = t_0,$$
is given by 
\begin{equation*}
\begin{aligned}
y(t) &= e_{\ominus p}(t,t_0)y_0 + \int_{t_0}^{t} 
e_{\ominus p}(t, \tau)f(\tau)\Delta\tau.
\end{aligned}
\end{equation*}
\end{theorem}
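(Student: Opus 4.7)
The plan is to treat both initial-value problems by the classical variation-of-constants trick: in each case I substitute $y(t) = \varphi(t)\, w(t)$, where $\varphi$ is the exponential that already solves the associated homogeneous problem (provided by Theorem~\ref{thm:1}), reduce the equation to a pure integration for $w^\Delta$, and then recover the stated formula using the algebraic identities of Theorem~\ref{thm:3}. Uniqueness in both cases is immediate: the difference of two putative solutions satisfies the corresponding homogeneous dynamic equation with zero initial value, and Theorem~\ref{thm:1} (together with linearity) forces it to vanish identically.

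For the first IVP $y^\Delta = p(t)\, y + f(t)$, I would set $y = e_p(\cdot, t_0)\, w$ and apply the product rule $(uv)^\Delta = u^\Delta v + u^\sigma v^\Delta$ together with $(e_p(\cdot, t_0))^\Delta = p\, e_p(\cdot, t_0)$. The terms proportional to $w$ cancel and the equation collapses to $e_p^\sigma(\cdot, t_0)\, w^\Delta = f$, i.e.\ $w^\Delta(t) = e_{\ominus p}(\sigma(t), t_0)\, f(t)$. Integrating from $t_0$ to $t$ with $w(t_0) = y_0$, multiplying back by $e_p(t, t_0)$, and invoking the cocycle identity $e_p(t, t_0)\, e_{\ominus p}(\sigma(\tau), t_0) = e_p(t, \sigma(\tau))$ from Theorem~\ref{thm:3} then yields the first formula.

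For the second IVP $y^\Delta = -p(t)\, y^\sigma + f(t)$, the shift in $y^\sigma$ makes things more delicate. I would set $y = e_{\ominus p}(\cdot, t_0)\, w$, so that $y^\sigma = e_{\ominus p}^\sigma(\cdot, t_0)\,(w + \mu\, w^\Delta)$, and use the identity $(e_{\ominus p})^\Delta = -p\, e_{\ominus p}^\sigma$ (which follows from $\ominus p = -p/(1+\mu p)$ combined with $e_{\ominus p}^\sigma = e_{\ominus p}/(1+\mu p)$). After cancellation of the terms proportional to $w$, the equation becomes $(1+\mu p)\, e_{\ominus p}^\sigma\, w^\Delta = f$, and the key algebraic simplification $(1+\mu p)\, e_{\ominus p}^\sigma = e_{\ominus p}$ reduces this to $w^\Delta(t) = e_p(t, t_0)\, f(t)$. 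Integration, combined with $e_{\ominus p}(t, t_0)\, e_p(\tau, t_0) = e_{\ominus p}(t, \tau)$ from Theorem~\ref{thm:3}, produces the second formula.

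The main obstacle I anticipate is the bookkeeping for the forward shift $\sigma$ in the second equation: one has to track $w$ versus $w^\sigma$ and $e_{\ominus p}$ versus $e_{\ominus p}^\sigma$ simultaneously, and the clean cancellation relies crucially on the algebraic identities among $\ominus p$, $\mu p$, and the exponential's shift. Once those are secured, both verifications and the uniqueness argument reduce to routine time-scale algebra.
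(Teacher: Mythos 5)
The paper does not prove Theorem~\ref{thm:5} at all --- it is quoted verbatim from the cited book of Bohner and Peterson --- so there is no in-paper argument to compare against; your variation-of-constants derivation is correct and is essentially the standard textbook proof. All the key steps check out: the reduction $e_p^\sigma(\cdot,t_0)\,w^\Delta=f$ in the first case, the identities $(e_{\ominus p})^\Delta=-p\,e_{\ominus p}^\sigma$ and $(1+\mu p)\,e_{\ominus p}^\sigma=e_{\ominus p}$ in the second, and the uniqueness argument (noting that $u^\Delta=-p\,u^\sigma$ rewrites as $u^\Delta=(\ominus p)\,u$, so Theorem~\ref{thm:1} applies with $\ominus p\in\mathcal{R}$). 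The only cosmetic caveat is that the cocycle identity $e_p(t,t_0)\,e_p(t_0,\sigma(\tau))=e_p(t,\sigma(\tau))$ you invoke is not literally among the items listed in the paper's Theorem~\ref{thm:3}, though it is a standard property found in the same reference.
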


The next two lemmas will play a crucial role in analyzing the long-term 
behavior of solutions. Both can be found in more recent works \cite{osc, prop}.

\begin{lemma}[See \cite{prop}]
\label{lemma:1}
If $p \in \mathcal{R}^+$, then
$$
0 < e_p(t,t_0) \leq \exp\left\{\int_{t_0}^{t} p(\tau) \,\Delta\tau\right\},
\quad \text{for all}\,\, t\geq t_0. 
$$
\end{lemma}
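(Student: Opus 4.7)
The plan is to split the inequality into two parts. The strict positivity $e_p(t,t_0) > 0$ is immediate from Theorem~\ref{thm:2}, since the assumption $p \in \mathcal{R}^+$ is exactly the hypothesis of that theorem. So the real content to prove is the upper bound by the ordinary (classical) exponential of the $\Delta$-integral of $p$.

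For the upper estimate, I would invoke the standard cylinder-transformation representation of the time-scale exponential from \cite{book:ts},
$$e_p(t,t_0) \;=\; \exp\!\left(\int_{t_0}^{t} \xi_{\mu(\tau)}\bigl(p(\tau)\bigr)\,\Delta\tau\right),$$
where $\xi_h(z) = h^{-1}\mathrm{Log}(1+hz)$ for $h>0$ and $\xi_0(z) = z$. The positive regressivity $1+\mu(\tau)p(\tau)>0$ ensures that the argument of the logarithm is positive on $[t_0,t)_{\mathbb{T}}$, so the integrand is real-valued and rd-continuous, and the representation is legitimate.

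The key pointwise inequality driving the bound is $\xi_h(z) \leq z$ whenever $h\geq 0$ and $1+hz>0$. When $h=0$ this is an equality, and when $h>0$ it reduces to the classical concavity estimate $\log(1+hz) \leq hz$, valid for all $hz>-1$. Applying this with $h=\mu(\tau)$ and $z=p(\tau)$ and $\Delta$-integrating from $t_0$ to $t$ gives
$$\int_{t_0}^{t} \xi_{\mu(\tau)}\bigl(p(\tau)\bigr)\,\Delta\tau \;\leq\; \int_{t_0}^{t} p(\tau)\,\Delta\tau,$$
and exponentiating both sides, together with Theorem~\ref{thm:2} for positivity, yields the announced two-sided estimate.

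The only subtle point is that the cylinder representation of $e_p$ is not stated explicitly in the excerpt, although it is one of the standard defining formulas in \cite{book:ts}. If one wanted to avoid invoking it, an alternative would be to compare $e_p(\cdot,t_0)$, which solves the regressive dynamic IVP of Theorem~\ref{thm:1}, with the classical exponential $v(t)=\exp\bigl(\int_{t_0}^{t} p(\tau)\,\Delta\tau\bigr)$ through a Gronwall-type argument on $\mathbb{T}$; but this route still collapses to the same pointwise inequality $\log(1+\mu(\tau)p(\tau)) \leq \mu(\tau)p(\tau)$ at scattered points, so the cylinder-transformation proof is the most direct and, I expect, the one the authors will give.
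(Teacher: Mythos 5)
Your argument is correct: positivity is exactly Theorem~\ref{thm:2}, and the upper bound follows from the cylinder representation $e_p(t,t_0)=\exp\bigl(\int_{t_0}^{t}\xi_{\mu(\tau)}(p(\tau))\,\Delta\tau\bigr)$ together with the pointwise estimate $\xi_h(z)\leq z$ (i.e.\ $\log(1+hz)\leq hz$), which is valid precisely because positive regressivity keeps $1+\mu(\tau)p(\tau)>0$. The paper itself states this lemma without proof, citing \cite{prop}, and the argument given there is essentially the one you propose, so there is nothing further to reconcile.
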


\begin{lemma}[See \cite{osc}]
\label{lemma:2}
If $p \in C_{rd}$ and $p(t) \geq 0$ for all $t \in \mathbb{T}$, then
$$
1 + \int_{t_0}^t p(\tau) \,\Delta\tau 
\leq e_p(t,t_0) \leq \exp\left\{\int_{t_0}^t p(\tau)\, \Delta\tau\right\},
\quad \text{for all}\,\, t\geq t_0. 
$$ 
\end{lemma}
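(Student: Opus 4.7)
The plan is to split the double inequality into its two halves and handle each separately, leaning on the characterization of $e_p(\cdot,t_0)$ as the solution of the IVP in Theorem~\ref{thm:1} together with the positivity statement of Theorem~\ref{thm:2}. The first step is the easy observation that the hypotheses $p \in C_{rd}$ and $p(t)\ge 0$ force $p \in \mathcal{R}^+$, since $1+\mu(t)p(t)\ge 1>0$. Once this is in place, the right-hand inequality $e_p(t,t_0)\le \exp\bigl\{\int_{t_0}^t p(\tau)\,\Delta\tau\bigr\}$ follows immediately from Lemma~\ref{lemma:1}, so no further work is required there.

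For the left-hand inequality, I would write $e_p$ in integrated form. Starting from $e_p^\Delta(t,t_0)=p(t)\,e_p(t,t_0)$ with $e_p(t_0,t_0)=1$ (Theorems~\ref{thm:1} and \ref{thm:3}) and integrating from $t_0$ to $t$, one gets
\[
e_p(t,t_0)=1+\int_{t_0}^{t} p(\tau)\,e_p(\tau,t_0)\,\Delta\tau .
\]
The task then reduces to showing $e_p(\tau,t_0)\ge 1$ for all $\tau\ge t_0$, from which the lower bound follows by monotonicity of the $\Delta$-integral against the constant function $1$.

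To establish $e_p(\tau,t_0)\ge 1$, I would use that $e_p(\tau,t_0)>0$ by Theorem~\ref{thm:2}, so that $e_p^\Delta(\tau,t_0)=p(\tau)\,e_p(\tau,t_0)\ge 0$. Hence $e_p(\cdot,t_0)$ is non-decreasing on $[t_0,\infty)_{\mathbb{T}}$, and combining with the initial value $e_p(t_0,t_0)=1$ yields the desired pointwise bound. Substituting back into the integral identity gives
\[
e_p(t,t_0)\;\ge\;1+\int_{t_0}^{t}p(\tau)\,\Delta\tau,
\]
which closes the proof.

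The main obstacle I anticipate is a formal one rather than a conceptual one: the passage from $e_p^\Delta\ge 0$ to monotonicity of $e_p(\cdot,t_0)$ on an arbitrary time scale. This is a standard consequence of the time-scale mean value theorem for $\Delta$-differentiable functions, but it must be invoked explicitly because $\mathbb{T}$ may contain both right-dense and right-scattered points. Everything else (the integration by the fundamental theorem of calculus on $\mathbb{T}$, the sign of $e_p$, and the final monotone comparison of $\Delta$-integrals) is routine once $p\in\mathcal{R}^+$ has been observed.
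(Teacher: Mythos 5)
Your argument is correct. Note that the paper does not prove this lemma at all---it is quoted from the cited reference \cite{osc}---so there is no in-paper proof to compare against; your write-up is a valid self-contained derivation. Both halves are handled appropriately: $p\ge 0$ gives $1+\mu(t)p(t)\ge 1$, hence $p\in\mathcal{R}^+$, and the upper bound is then exactly Lemma~\ref{lemma:1}. For the lower bound, one small simplification is available: once you have the integrated identity $e_p(\tau,t_0)=1+\int_{t_0}^{\tau}p(s)\,e_p(s,t_0)\,\Delta s$, the pointwise bound $e_p(\tau,t_0)\ge 1$ follows immediately from the nonnegativity of the integrand (using $p\ge 0$ and $e_p>0$ from Theorem~\ref{thm:2}) and monotonicity of the $\Delta$-integral, so the detour through the time-scale mean value theorem that you flag as the main obstacle is not actually needed. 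Substituting $e_p(s,t_0)\ge 1$ back into the same identity then yields $e_p(t,t_0)\ge 1+\int_{t_0}^{t}p(\tau)\,\Delta\tau$ as you state.
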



\section{Main Results}
\label{sec3}

There are several possible ways to discretize system \eqref{sir:continuous} 
in order to obtain a discrete-time version. The choice of discretization method can 
significantly affect the qualitative behavior of the resulting model. For instance, 
applying the explicit Euler method to system \eqref{sir:continuous} leads to
\begin{equation}
\label{discExp}	
\begin{cases}
x_{n+1} = x_n - h \dfrac{b x_n y_n}{x_n + y_n}, \\[10pt]
y_{n+1} = y_n + h \left( \dfrac{b x_n y_n}{x_n + y_n} - c y_n \right), \\[10pt]
z_{n+1} = z_n + h \cdot c y_n,
\end{cases}
\end{equation}
where $ h> 0$ is the time step. On the other hand, 
using the implicit Euler method gives
\begin{equation}
\label{discImp}	
\begin{cases}
x_{n+1} = x_n - h \dfrac{b x_{n+1} y_{n+1}}{x_{n+1} + y_{n+1}}, \\[10pt]
y_{n+1} = y_n + h \left( \dfrac{b x_{n+1} y_{n+1}}{x_{n+1} + y_{n+1}} - c y_{n+1} \right), \\[10pt]
z_{n+1} = z_n + h \cdot c y_{n+1}.
\end{cases}
\end{equation}
Each method has its own advantages and drawbacks. For example, the explicit Euler method 
is simpler to implement but may lead to instability or loss of qualitative properties such 
as non-negativity \cite{Stuart96}. Even though the implicit Euler method is typically more
stable, it does not ensure the preservation of non-negativity. In fact, by solving 
the equation \eqref{discImp} for $y_{n+1}$, we obtain:
$$
y_{n+1} = \frac{y_n}{1 - h \left( \dfrac{b x_{n+1}}{x_{n+1} + y_{n+1}} - c \right)}.
$$
This expression shows that $y_{n+1}$ can become negative even if $y_n > 0$, depending on 
the parameters and the step size $h$, since the denominator can be negative. Therefore, 
the implicit Euler method does not guarantee the non-negativity of the solution without 
further conditions. Thus, here we aim to formulate a time-scale model in such a way that all 
the essential structural properties of the original model \eqref{sir:continuous} 
are preserved across different time scales. 
This ensures consistency with both the continuous and discrete cases, 
while respecting the specific features of the time scale calculus. Precisely, we propose 
the following dynamic SIR model on time scales:
\begin{equation}
\label{dynamic:sir}
\begin{cases}
x^\Delta(t) = -\dfrac{b(t)x^\sigma(t)y(t)}{x(t)+y(t)},\\
y^\Delta(t) = \dfrac{b(t)x^\sigma(t)y(t)}{x(t)+y(t)} - c(t)y^\sigma(t),\\
z^\Delta(t) = c(t)y^\sigma(t),
\end{cases}	
\end{equation}
where $b, c : \mathbb{T} \rightarrow \mathbb{R}_0^+$, $x(t_0),\, y(t_0) > 0$ 
and $z(t_0) \geq 0$. Moreover, $x, y: \mathbb{T} \rightarrow \mathbb{R}^+$ and 
$z:~ \mathbb{T} \rightarrow \mathbb{R}^+_0$. 

\begin{remark}
\label{rem:constant:t}
Note that 
$x^\Delta(t) + y^\Delta(t) + z^\Delta(t) = 0$, which means that the 
population remains constant over time, and thus $z(t) = N - x(t) - y(t)$, 
for all $t \in \mathbb{T}$. 
\end{remark}

In continuous time (i.e., when $\mathbb{T}=\mathbb{R}$) the forward jump operator is the 
identity operator ($\sigma(t) = t$) and its position and relevance in a differential equation 
is not visible. However, on other time scales (e.g., $\mathbb{T}=h\mathbb{Z}$, 
$\mathbb{T}=q^{\mathbb{N}_0}$, etc.) its presence is crucial 
and its position changes completely the dynamics of the system. 
For instance, in the language of time scales, 
system \eqref{discExp} takes the form
\begin{equation}
\label{discExp2}	
\begin{cases}
x^\sigma(t) = x(t) - \mu \dfrac{b x(t) y(t)}{x(t) + y(t)}, \\[10pt]
y^\sigma(t) = y(t) + \mu  \left( \dfrac{b x(t) y(t)}{x(t) + y(t)} - c y(t) \right), \\[10pt]
z^\sigma(t) = z(t) + \mu  \cdot c y(t),
\end{cases}
\end{equation}
while \eqref{discImp} is equivalent to	
\begin{equation}
\label{discImp2}	
\begin{cases}
x^\sigma(t) = x(t) - \mu \dfrac{b x^\sigma(t) y^\sigma(t)}{x^\sigma(t) + y^\sigma(t)}, \\[10pt]
y^\sigma(t) = y(t) + \mu \left( \dfrac{b x^\sigma(t) y^\sigma(t)}{x^\sigma(t) + y^\sigma(t)} - c y^\sigma(t) \right), \\[10pt]
z^\sigma(t) = z(t) + \mu \cdot c y^\sigma(t),
\end{cases}
\end{equation}
where $t \in h\mathbb{Z}$ and $\mu(t) \equiv h =: \mu$. 
We see that for $\mathbb{T} = h\mathbb{Z}$ the position 
of the $\sigma$ operator plays a crucial role in the formulation of discrete dynamical systems
and, similarly, the placement of $\sigma$ is equally important and 
must be handled with care in a general time scale. As already discussed, 
classical numerical methods may fail to preserve essential
structural properties of the original model, such as the non-negativity of solutions. 
To avoid such inconsistencies and potential numerical instabilities, one needs to consider 
a non-classical finite difference scheme. Mickens' nonstandard finite difference scheme has 
been widely used to the discretization of dynamical systems due to its ability to preserve key 
qualitative properties of the original model \cite{sir:discrete,MR4276109}. 

According to \cite{Mickens05}, a dynamically consistent non-standard finite 
difference scheme depends strongly on a rule that states that both linear 
and nonlinear terms of the state variables and their derivatives 
may need to be substituted by nonlocal forms. 
Having this in mind, in our model \eqref{dynamic:sir}
we adapted the underlying idea of the Mickens' method to the broader 
context of time scales calculus, ensuring, as we will prove, 
that the resulting formulation preserves the
qualitative properties of the original model across different time domains. 
In particular, as already discussed, 
applying non-local forms uniformly to all variables 
does not ensure the preservation of non-negativity 
in the solutions. Therefore, the placement of the $\sigma$ operator must be done 
strategically to guarantee this critical property. Our next results
justify the well-posedness of the proposed model \eqref{dynamic:sir}.

\begin{theorem}[Explicit solution to the dynamic SIR model \eqref{dynamic:sir}]
\label{thm:6}
If $f,\, g \in \mathcal{R}$, then the unique solution of 
\eqref{dynamic:sir} is given by
\begin{equation}
\label{dynamic:solution}
\begin{cases}
x(t) = e_{\ominus g} (t,t_0)\,x_0,\\
y(t) = e_{\ominus \left(f \oplus g\right)}(t,t_0)\,y_0,\\
z(t) = N - e_{\ominus g}(t,t_0)\left[x_0 + e_{\ominus f}(t,t_0)\,y_0\right],
\end{cases}
\end{equation}
where 
\begin{equation}
\label{eq:f:g}
f(t):= \frac{(c-b)(t)}{1 + b(t)\mu(t)},\,\quad g(t) 
:= \frac{b(t)\kappa}{e_f(t,t_0) + \kappa},
\end{equation}
$\kappa = \frac{y_0}{x_0}$ and $N = x_0 + y_0 + z_0$.
\end{theorem}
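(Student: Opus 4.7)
The plan is to decouple the first two equations of \eqref{dynamic:sir} by studying the ratio $u(t) := y(t)/x(t)$, which is well defined since $x(t)>0$ by assumption. Applying the time-scale quotient rule gives
$$
u^\Delta = \frac{y^\Delta x - y x^\Delta}{x\,x^\sigma},
$$
and a direct computation using the first two equations of \eqref{dynamic:sir} shows that the nonlinear term $b(t)x^\sigma y/(x+y)$ cancels in the numerator, leaving $y^\Delta x - y x^\Delta = b x^\sigma y - c x y^\sigma$ and hence $u^\Delta = b u - c u^\sigma$. This clean cancellation is the structural reason for the particular placement of $\sigma$ in \eqref{dynamic:sir} and is the main algebraic observation driving the proof.

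Next, I would use the identity $u = u^\sigma - \mu u^\Delta$ (the definition of the delta derivative) to eliminate the non-shifted $u$: substituting yields $(1+b\mu)u^\Delta = -(c-b)u^\sigma$, that is, $u^\Delta = -f u^\sigma$ with $f = (c-b)/(1+b\mu)$. Since $f\in\mathcal{R}$ by hypothesis, the second form of Theorem \ref{thm:5} (with zero forcing) gives the unique solution $u(t) = \kappa\, e_{\ominus f}(t,t_0)$, where $\kappa = y_0/x_0$. Plugging this back into the first equation of \eqref{dynamic:sir} and using $e_{\ominus f}=1/e_f$ from Theorem \ref{thm:3}, one gets
$$
x^\Delta = -\frac{b(y/x)}{1+(y/x)}\,x^\sigma = -\frac{b\kappa}{e_f(t,t_0)+\kappa}\,x^\sigma = -g(t)\,x^\sigma,
$$
which matches the stated $g$. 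A second application of Theorem \ref{thm:5} produces $x(t) = e_{\ominus g}(t,t_0)\,x_0$, and then $y = u\cdot x = \kappa\, e_{\ominus f}(t,t_0)\,e_{\ominus g}(t,t_0)\,x_0$, combined with $e_{\ominus f}\,e_{\ominus g} = e_{\ominus f\,\oplus\,\ominus g} = e_{\ominus(f\oplus g)}$ (Theorem \ref{thm:3} together with the standard identity $\ominus p\oplus\ominus q = \ominus(p\oplus q)$), yields the formula for $y$ in \eqref{dynamic:solution}.

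The formula for $z$ then follows at once from Remark \ref{rem:constant:t}, which gives $z(t) = N - x(t) - y(t)$. Uniqueness is inherited from the uniqueness asserted in Theorem \ref{thm:5} at each of the two applications. The hardest part I anticipate is the bookkeeping of the first step: verifying that the nonlinear contribution cancels in $y^\Delta x - y x^\Delta$ and then correctly inverting $(1+b\mu)$—which requires the assumed regressivity—to land exactly on the form $u^\Delta = -f u^\sigma$. Everything downstream is essentially an application of Theorems \ref{thm:3} and \ref{thm:5} together with the explicit computation of $b\kappa e_{\ominus f}/(1+\kappa e_{\ominus f})$ to recognize $g$.
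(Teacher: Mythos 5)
Your proposal is correct and follows essentially the same route as the paper: reduce the system to a first-order linear dynamic equation for the ratio of the two unknowns, solve it, and substitute back to obtain $x$, then $y$, then $z$ via conservation of $N$. The only cosmetic difference is that you work with $u=y/x$ and the equation $u^\Delta=-f\,u^\sigma$ (second form of Theorem~\ref{thm:5}), whereas the paper uses the reciprocal $\omega=x/y$ and $\omega^\Delta=f\,\omega$; the two are equivalent and lead to the same $f$, $g$, and final formulas.
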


\begin{proof}
Let us define $\omega = \frac{x}{y}$. Then, we have
$$
\omega^\Delta = \frac{x^\Delta y - y^\Delta x}{yy^\sigma}
= \frac{\frac{-bx^\sigma y}{x + y}y - \left(\frac{bx^\sigma y}{x+y} 
- cy^\sigma\right)x}{yy^\sigma}
= \frac{-bx^\sigma y + cy^\sigma x}{yy^\sigma}
= -b\omega^\sigma + c\omega.
$$
Since $\omega$ is differentiable at $t$, then 
$\omega^\sigma = \omega + \mu(t) \omega^\Delta$. Thus, 
$$
\omega^\Delta = -b(t)(\omega + \mu(t)w^\Delta) + c(t)\omega 
\Leftrightarrow \omega^\Delta = \frac{(c-b)(t)}{1 + b(t)\mu(t)}\omega
\Leftrightarrow \omega^\Delta = f(t) \omega, 
$$
which is a first-order linear dynamic equation whose solution is known
and is given by
$$
\omega(t) = e_{f}(t,t_0)\,\omega_0. 
$$
So, 
\begin{equation}
\label{y:sol}
y(t) = \kappa\, e_{\ominus f}(t,t_0)\,x(t),
\end{equation}
with $\kappa = \frac{y_0}{x_0}$. 
Plugging \eqref{y:sol} into \eqref{dynamic:sir}, we get
$$
x^\Delta = -\frac{b(t)\kappa e_{\ominus f}(t,t_0)}
{1 + \kappa e_{\ominus f}(t,t_0)}\,x^\sigma,
$$
which has the solution
$$
x(t) = e_{\ominus g}(t,t_0)\,x_0,
$$
where 
$$
g(t)= \frac{b(t)\kappa e_{\ominus f}(t,t_0)}
{1 + \kappa e_{\ominus f}(t,t_0)}
= \frac{b(t)\kappa e_{\ominus f}(t,t_0)}{1 + \kappa e_{\ominus f}(t,t_0)}
= \frac{b(t)\kappa}{e_f(t,t_0) + \kappa}.
$$
By \eqref{y:sol}, we have 
$$
y(t) = \kappa e_{\ominus f}(t,t_0)\cdot
e_{\ominus g}(t,t_0)\, x_0 = e_{\ominus(f \oplus g)}(t,t_0)\,y_0.
$$
Finally, as $z(t) = N - x(t) - y(t)$, we obtain that
$$
z(t) = N - e_{\ominus g}(t,t_0)\left[x_0 + e_{\ominus f}(t,t_0)\,y_0\right].
$$
The proof is complete. 
\end{proof}

Theorem~\ref{thm:6} provides an explicit solution to the SIR model formulated on an arbitrary 
time scale. This result is particularly important as it unifies, within a single framework, 
the solutions corresponding to both the continuous-time and discrete-time SIR models. 
In other words, by choosing different time scales (e.g., $\mathbb{T} = \mathbb{R}$ for 
continuous time, or $\mathbb{T} = h\mathbb{Z}$ for discrete time), the general solution 
obtained recovers the classical solutions known in the literature: see
\cite{cont:sol} and \cite{sir:discrete}, respectively for the continuous 
and discrete-time cases.
In the classical continuous case, the exact solution involves exponential functions. 
Our result \eqref{dynamic:solution} -- 
a single and elegant expression valid across a wide variety of time domains -- 
provides a natural generalization of the classical solutions through the use 
of time scale exponentials. Moreover, it preserves essential qualitative properties 
such as conservation of total population (Remark~\ref{rem:constant:t})
and, as we shall prove, non-negativity of solutions (Proposition~\ref{prop:nonneg}), 
while providing a unified 
expression that recovers continuous and discrete cases as particular examples. This makes our 
model \eqref{dynamic:sir} highly relevant for applications where the disease dynamics 
cannot be fully captured by purely continuous or purely discrete models. Before
proving it, we compare, in terms of graphs, the solution of our model 
\eqref{dynamic:sir} with those available in the literature.

For $\mathbb{T} = \mathbb{R}$ one obtains from \eqref{dynamic:sir}
the classical continuous-time SIR system \eqref{sir:continuous}.
The exact solution of model \eqref{sir:continuous} 
was first derived in \cite{cont:sol} for the case when $b$
and $c$ are constants. Figure~\ref{infected_constant} illustrates 
the dynamics of the infected population as 
given by the exact solution of system \eqref{sir:continuous}, as found
in \cite{cont:sol}, our proposed model \eqref{dynamic:sir} with $\mathbb{T}=\mathbb{Z}$, 
and the existing time-scale model \cite{sir:ts} from the literature with $\mathbb{T}=\mathbb{Z}$, 
considering constant parameters. The difference between the solution of model \cite{sir:ts}, 
our model \eqref{dynamic:sir} and the continuous-time system \eqref{sir:continuous} 
is striking: the time-scale model \cite{sir:ts} allows for negative solutions, 
which jeopardizes its biological relevance. 
In contrast, our model preserves non-negativity, ensuring consistency with the fundamental 
principles of the classical model \eqref{sir:continuous}.
\begin{figure}[ht]
\centering
\includegraphics[scale=0.6]{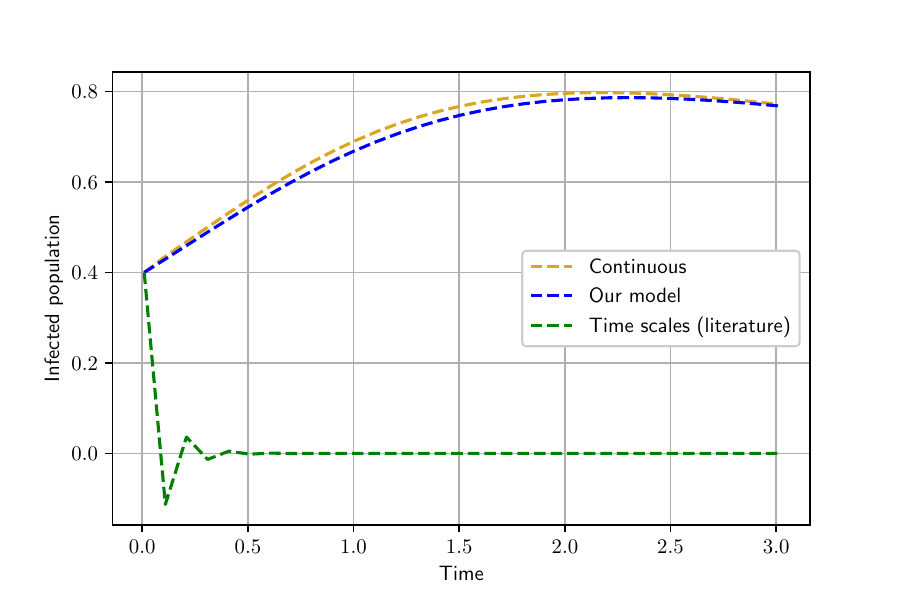}
\caption{Dynamics of infected population 
with constant coefficients $b = 1.5$ and $c = 0.1$	
for the time-scale models \eqref{dynamic:sir} and \cite{sir:ts} 
with $\mathbb{T} = \mathbb{Z}$ versus the classical 
continuous SIR model \eqref{sir:continuous}.}
\label{infected_constant} 
\end{figure}

Figure~\ref{infected_variable} presents a comparison of the dynamics of the infected 
population, under time-varying parameters. More precisely, we consider
\begin{equation}
\label{eq:bt}
b(t) = 0.8 - 0.6\sin(t), 
\end{equation}
and
\begin{equation}
\label{eq:ct}
c(t) = \frac{t}{1+t}.
\end{equation}
Once again, the time-scale model from \cite{sir:ts} differs significantly from both our 
model \eqref{dynamic:sir} and the classical continuous SIR model 
\eqref{sir:continuous}, highlighting its inconsistency when parameter variability is 
introduced.
\begin{figure}[ht]
\centering
\includegraphics[scale=0.6]{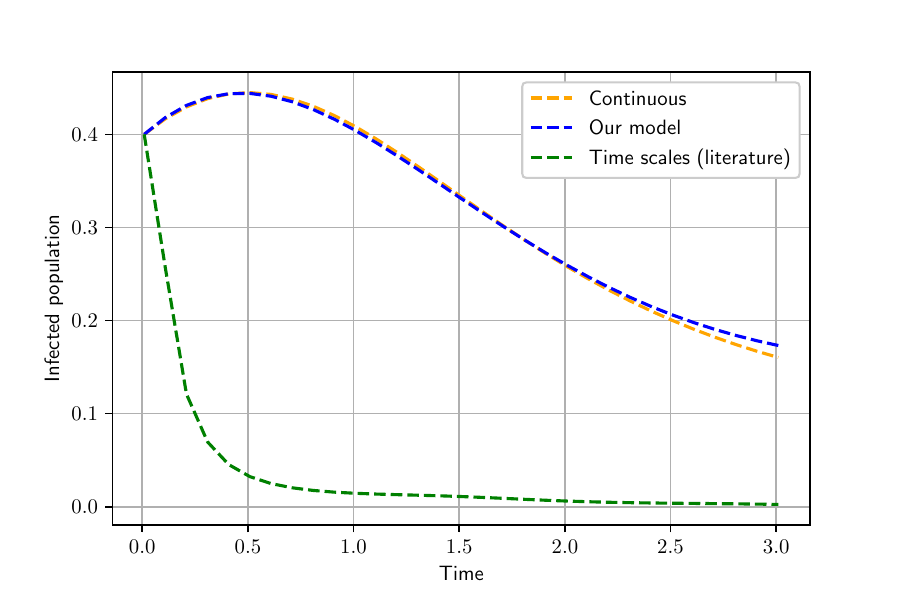}
\caption{Dynamics of infected population 
with time-dependent coefficients \eqref{eq:bt} and \eqref{eq:ct}	
for the time-scale models \eqref{dynamic:sir} and \cite{sir:ts} 
with $\mathbb{T} = \mathbb{Z}$ versus the classical 
continuous SIR model \eqref{sir:continuous}.}
\label{infected_variable} 
\end{figure}

\begin{lemma}
\label{lemma:3}
Consider $f$ and $g$ as defined in \eqref{dynamic:solution}--\eqref{eq:f:g}.
Then, $f,\,g \in \mathcal{R}^+$. 
\end{lemma}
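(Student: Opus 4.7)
The plan is to verify the defining inequality $1 + \mu(t)h(t) > 0$ for each of $h = f$ and $h = g$, using only the sign assumptions on the parameters ($b, c \geq 0$), the positivity of the initial data ($x_0, y_0 > 0$, hence $\kappa > 0$), and the non-negativity of the graininess $\mu(t) \geq 0$.

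First I would handle $f$. The natural move is to put $1 + \mu(t) f(t)$ over a common denominator:
$$
1 + \mu(t) f(t) = \frac{1 + b(t)\mu(t) + \mu(t)(c(t) - b(t))}{1 + b(t)\mu(t)} = \frac{1 + c(t)\mu(t)}{1 + b(t)\mu(t)}.
$$
Because $b(t), c(t) \geq 0$ and $\mu(t) \geq 0$, both numerator and denominator are at least $1$, so the quotient is strictly positive. This shows $f \in \mathcal{R}^+$ (rd-continuity of $f$ follows from that of $b$ and $c$ and the fact that the denominator $1 + b\mu$ never vanishes).

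Next I would treat $g$. Since $f \in \mathcal{R}^+$ from the previous step, Theorem~\ref{thm:2} gives $e_f(t, t_0) > 0$ for all $t \in \mathbb{T}$. Combined with $\kappa > 0$ and $b(t) \geq 0$, this yields
$$
g(t) = \frac{b(t)\kappa}{e_f(t,t_0) + \kappa} \geq 0,
$$
with the denominator bounded away from zero. Consequently $1 + \mu(t)g(t) \geq 1 > 0$, so $g \in \mathcal{R}^+$.

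The argument is essentially a two-line algebraic check once the right identity for $1 + \mu f$ is written down; I expect no genuine obstacle. The only subtlety is logical ordering: the positive regressivity of $g$ relies on $e_f(t,t_0) > 0$, which in turn requires $f \in \mathcal{R}^+$, so the two claims must be proved in this order rather than in parallel.
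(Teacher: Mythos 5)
Your proposal is correct and follows essentially the same route as the paper: rewrite $1+\mu f$ as $\frac{1+c\mu}{1+b\mu}>0$ for $f$, and observe $g\geq 0$ (hence $1+\mu g\geq 1$) for $g$. Your explicit remark that the positivity of $e_f(t,t_0)+\kappa$ — and hence the claim for $g$ — depends on first establishing $f\in\mathcal{R}^+$ is a point the paper leaves implicit, but it is the same argument.
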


\begin{proof}
Let $f(t)$ and $g(t)$ be as defined in Theorem \ref{thm:6}. From $f(t)$ we 
conclude that
$$
1 + \mu(t) f(t) > 0 \Leftrightarrow c(t) > -\frac{1}{\mu(t)},
$$
which is always true since $c: \mathbb{T} \rightarrow \mathbb{R}^+_0$. 
Thus, $f(t) \in \mathcal{R}^+$.  
Moreover, if $b(t) = 0$, then $g(t) = 0$, and so $1 + \mu(t)g(t) = 1 > 0$.
On the other hand, if $b(t) > 0$, then we have $g(t) > 0$ for all $t \in \mathbb{T}$. 
In both cases, it is clear that $g(t) \in \mathcal{R}^+$.
Note that for $\mathbb{T} = \mathbb{R}$, conditions for both $f$ and $g$ to be 
positively regressive are trivially satisfied since $\mu(t) \equiv 0$, 
for all $t \in \mathbb{R}$. The proof is complete.
\end{proof}

\begin{proposition}[Biological relevance of the solution 
to the dynamic SIR model \eqref{dynamic:sir}]
\label{prop:nonneg}
Consider \eqref{dynamic:solution}. If $x_0,\,y_0 > 0$ and $z_0 \geq 0$, 
then $x,\,y,\,z$ are non-negative for all $t \in \mathbb{T}$.  
\end{proposition}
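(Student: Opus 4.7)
The plan is to treat the three components separately, using Lemma~\ref{lemma:3} together with the properties of the time-scale exponential summarized in Theorems~\ref{thm:2} and \ref{thm:3}, and then handle $z$ via the original dynamic equation rather than the closed-form expression.

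First I would address $x(t)$. By Lemma~\ref{lemma:3}, $g \in \mathcal{R}^+$, so Theorem~\ref{thm:2} gives $e_g(t,t_0) > 0$ for all $t \in \mathbb{T}$. Combined with the identity $e_{\ominus g}(t,t_0) = 1/e_g(t,t_0)$ from Theorem~\ref{thm:3}, this yields $e_{\ominus g}(t,t_0) > 0$, and hence $x(t) = e_{\ominus g}(t,t_0)\,x_0 > 0$ because $x_0 > 0$.

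Next, for $y(t)$, the key preliminary observation is that $\mathcal{R}^+$ is closed under $\oplus$: since $1 + \mu(t)(f \oplus g)(t) = \bigl(1+\mu(t)f(t)\bigr)\bigl(1+\mu(t)g(t)\bigr)$ is the product of two positive factors by Lemma~\ref{lemma:3}, one has $f \oplus g \in \mathcal{R}^+$. Applying Theorem~\ref{thm:2} again then gives $e_{f \oplus g}(t,t_0) > 0$, so by Theorem~\ref{thm:3} the reciprocal $e_{\ominus(f \oplus g)}(t,t_0)$ is positive, and thus $y(t) > 0$ for all $t \in \mathbb{T}$.

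For $z(t)$ I would avoid the closed-form expression (where the sign is not transparent) and instead return to the defining equation $z^\Delta(t) = c(t)\,y^\sigma(t)$. Having already established $y > 0$ on $\mathbb{T}$, and in particular $y^\sigma \geq 0$, together with $c \geq 0$ by hypothesis, we get $z^\Delta \geq 0$. Integrating from $t_0$ to $t$ then produces
\begin{equation*}
z(t) = z_0 + \int_{t_0}^{t} c(\tau)\,y^\sigma(\tau)\,\Delta\tau \geq z_0 \geq 0.
\end{equation*}

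The main obstacle I anticipate is not really technical: all the work has essentially been done in Lemma~\ref{lemma:3}, so the only subtle point is the closure of $\mathcal{R}^+$ under $\oplus$ needed for $y$, and the decision to handle $z$ through the dynamic equation rather than through the explicit representation in \eqref{dynamic:solution}, since the latter would force one to show $e_{\ominus g}(t,t_0)\bigl[x_0 + e_{\ominus f}(t,t_0)\,y_0\bigr] \leq N$, which is not transparent without invoking conservation of the total population (Remark~\ref{rem:constant:t}) combined with the non-negativity just proved for $x$ and $y$.
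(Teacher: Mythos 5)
Your proof is correct and follows essentially the same route as the paper, whose entire proof is the one-line remark that the claim ``follows trivially from Theorem~\ref{thm:2} and Lemma~\ref{lemma:3}''; you have simply made explicit the details that the paper leaves implicit, namely the closure of $\mathcal{R}^+$ under $\oplus$ for the $y$-component and the reduction of the $z$-component to $z^\Delta = c\,y^\sigma \geq 0$ (integrated forward from $t_0$). There is no gap.
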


\begin{proof}
Follows trivially from Theorem~\ref{thm:2} and Lemma~\ref{lemma:3}.
\end{proof}

\begin{lemma}[Equilibria of the dynamic SIR model \eqref{dynamic:sir}]
\label{lemma:4}
Suppose $c(t) > 0$ at some point $t \in \mathbb{T}$. The equilibria of 
\eqref{dynamic:sir} are $(\alpha, 0, N-\alpha)$, where $\alpha \in [0,N]$ 
and $N = x_0 + y_0 + z_0$. 
\end{lemma}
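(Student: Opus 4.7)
The plan is to characterize an equilibrium of \eqref{dynamic:sir} as a constant solution $(x,y,z)$ satisfying $x^\Delta \equiv y^\Delta \equiv z^\Delta \equiv 0$ on $\mathbb{T}$. Under this constancy, $x^\sigma = x$ and $y^\sigma = y$, so the three dynamic equations collapse to the purely algebraic relations
\begin{equation*}
\frac{b(t)\,x\,y}{x+y} = 0, \qquad \frac{b(t)\,x\,y}{x+y} - c(t)\,y = 0, \qquad c(t)\,y = 0, \qquad t\in\mathbb{T}.
\end{equation*}

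The first step of the argument is to extract $y = 0$ from the third identity. By hypothesis there exists $t_*\in\mathbb{T}$ with $c(t_*) > 0$, and then $c(t_*)\,y = 0$ forces $y = 0$. Once $y=0$, the first and second equations are automatically satisfied for every choice of $x$, so $x$ can be any constant, call it $\alpha$.

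Next I would invoke the conservation law of Remark~\ref{rem:constant:t}, which gives $x+y+z = N$ on $\mathbb{T}$; combined with $y=0$ this yields $z = N-\alpha$. To pin down the range of $\alpha$, I would appeal to Proposition~\ref{prop:nonneg}: admissible trajectories of \eqref{dynamic:sir} must be non-negative, so an equilibrium must satisfy $x=\alpha \geq 0$ and $z = N-\alpha \geq 0$, i.e.\ $\alpha\in[0,N]$. Conversely, for any $\alpha\in[0,N]$, the constant triple $(\alpha,0,N-\alpha)$ trivially annihilates the right-hand sides of \eqref{dynamic:sir} and lies in the admissible region, so it is indeed an equilibrium.

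I do not anticipate a genuine obstacle: the only mildly delicate point is the passage from ``$c(t_*)y = 0$ at one point'' to ``$y\equiv 0$'', but this is immediate from the constancy assumption on $y$, which was already imposed by the very definition of an equilibrium. The role of the hypothesis $c(t)>0$ at some point is precisely to rule out the degenerate case $c\equiv 0$, in which $y$ would be unconstrained and the equilibrium set would be strictly larger.
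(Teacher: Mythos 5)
Your proposal is correct and follows essentially the same route as the paper's proof: use $c(t)>0$ at some point together with $z^\Delta=0$ to force $y\equiv 0$, after which the remaining equations vanish for any constant $x=\alpha$, and conservation of $N$ gives $z=N-\alpha$. Your write-up is in fact somewhat more careful than the paper's, since you explicitly note that $x^\sigma=x$ and $y^\sigma=y$ at an equilibrium, justify the restriction $\alpha\in[0,N]$ via non-negativity, and check the converse inclusion.
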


\begin{proof}
Let $x,\,y,\,z$ be solutions of \eqref{dynamic:sir}. If $c(t) > 0$, then
$$
z^\Delta(t) = c(t)y(t) = 0 \Leftrightarrow y(t) = 0. 
$$
In this case, 
$$
x^\Delta = -\frac{b(t)x(t)y(t)}{x(t) + y(t)} = 0, 
$$
for all $x \in [0,N]$, and 
$$
y^\Delta = \frac{b(t)x(t)y(t)}{x(t) + y(t)} - c(t)y(t) = 0.
$$
Thus, the result holds. 
\end{proof}

\begin{theorem}[Asymptotic stability of the extinction equilibrium]
\label{thm:7}
Consider \eqref{dynamic:sir} and let $\mathbb{T}$ be unbounded from above.
Moreover, assume
\begin{equation}
\label{cond:1}
\exists \, l > 0 : \int_{t_0}^t \frac{(c-b)(\tau)}{1 + b(\tau)\mu(\tau)}\, \Delta\tau\, 
\leq l, \quad \text{for all}\,\, t \geq t_0,
\end{equation}
and 
\begin{equation}
\label{cond:2}
\int_{t_0}^\infty b(\tau)\, \Delta\tau = \infty.
\end{equation}
Under these conditions, all solutions of \eqref{dynamic:sir} converge to the 
equilibrium $(0,0,N)$ with $N = x_0 + y_0 + z_0$.
\end{theorem}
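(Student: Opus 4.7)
The plan is to read the asymptotic limits of $x$, $y$, and $z$ directly off the exact solution \eqref{dynamic:solution}, thereby reducing the theorem to showing that the time-scale exponentials $e_g(t,t_0)$ and $e_{f\oplus g}(t,t_0)$ diverge to $+\infty$ as $t\to\infty$. By Lemma~\ref{lemma:3}, both $f$ and $g$ lie in $\mathcal{R}^+$, so Theorem~\ref{thm:2} keeps each relevant exponential strictly positive, which is what makes the explicit formula meaningful in the first place.

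First I would apply Lemma~\ref{lemma:1} to $f$: hypothesis \eqref{cond:1} immediately gives the uniform upper bound
\[
e_f(t,t_0)\;\leq\;\exp\!\Bigl(\int_{t_0}^{t}f(\tau)\,\Delta\tau\Bigr)\;\leq\;e^{l}.
\]
Feeding this bound into the definition \eqref{eq:f:g} of $g$ produces the pointwise inequality $g(t)\geq \frac{b(t)\kappa}{e^{l}+\kappa}\geq 0$. Because $g$ is non-negative, Lemma~\ref{lemma:2} applies and yields
\[
e_g(t,t_0)\;\geq\;1+\int_{t_0}^{t}g(\tau)\,\Delta\tau\;\geq\;1+\frac{\kappa}{e^{l}+\kappa}\int_{t_0}^{t}b(\tau)\,\Delta\tau,
\]
and the right-hand side diverges by \eqref{cond:2}. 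Hence $e_g(t,t_0)\to\infty$, and the first line of \eqref{dynamic:solution} gives $x(t)=x_0/e_g(t,t_0)\to 0$.

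For the infected component, Theorem~\ref{thm:3} allows me to rewrite $y(t)=y_0/[e_f(t,t_0)\,e_g(t,t_0)]$. Since $e_g(t,t_0)\to\infty$ is already in hand, what remains is to secure a positive lower bound on $e_f(t,t_0)$. The natural route is to apply Lemma~\ref{lemma:1} to the positively regressive function $\ominus f=\frac{b-c}{1+c\mu}$, then read the resulting upper bound on $e_{\ominus f}(t,t_0)=1/e_f(t,t_0)$ against \eqref{cond:1} to obtain the required lower bound on $e_f$. This reconciliation between the $\ominus$-transform and the one-sided control in \eqref{cond:1} is the main obstacle, since \eqref{cond:1} directly constrains only $\int f$ from above. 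Once $y(t)\to 0$ is established, the conservation law $x(t)+y(t)+z(t)\equiv N$ from Remark~\ref{rem:constant:t} immediately gives $z(t)=N-x(t)-y(t)\to N$, completing the proof.
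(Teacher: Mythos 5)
Your treatment of the susceptible class is exactly the paper's argument: Lemma~\ref{lemma:1} together with \eqref{cond:1} gives $e_f(t,t_0)\leq e^l$, hence $g(t)\geq \frac{b(t)\kappa}{e^l+\kappa}\geq 0$, and Lemma~\ref{lemma:2} with \eqref{cond:2} force $e_g(t,t_0)\to\infty$, so $x(t)=e_{\ominus g}(t,t_0)\,x_0\to 0$. The gap is precisely where you say it is: to get $y(t)=y_0/[e_f(t,t_0)e_g(t,t_0)]\to 0$ you need $e_f(t,t_0)$ bounded away from zero, and the route you sketch cannot deliver it. Applying Lemma~\ref{lemma:1} to $\ominus f$ yields $e_{\ominus f}(t,t_0)\leq \exp\{\int_{t_0}^t(\ominus f)(\tau)\,\Delta\tau\}$, and to make that right-hand side bounded you would need $\int_{t_0}^t f$ to be (roughly) bounded \emph{below}, whereas hypothesis \eqref{cond:1} only bounds it \emph{above}. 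So the proposal proves $x\to 0$ but not $y\to 0$, and hence not $z\to N$ either.

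You should know that the paper's own proof has exactly the same hole: after establishing $e_{\ominus g}(t,t_0)\to 0$ it simply declares $\lim y(t)=0$ ``from Theorem~\ref{thm:6}'', never controlling the factor $e_{\ominus f}(t,t_0)$, which a priori may blow up. In fact no argument can close this step from \eqref{cond:1}--\eqref{cond:2} alone: take $\mathbb{T}=\mathbb{R}$, $b\equiv 1$, $c\equiv 0$. Then $f\equiv -1$, so \eqref{cond:1} holds for any $l>0$ and \eqref{cond:2} holds; yet $z^\Delta\equiv 0$ while $x\to 0$, so $y(t)=y_0(1+\kappa)e^{t-t_0}/(1+\kappa e^{t-t_0})\to x_0+y_0>0$, contradicting the stated conclusion. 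Convergence of $y$ and $z$ therefore requires an additional hypothesis --- e.g.\ a two-sided bound $\lvert\int_{t_0}^t f(\tau)\,\Delta\tau\rvert\leq l$, or $c\geq b$, or anything keeping $e_f(t,t_0)$ bounded below by a positive constant --- under which your step ``$e_f$ bounded below $\Rightarrow e_{\ominus f}$ bounded above'' goes through and the remainder of your argument (including the use of $x+y+z\equiv N$ for the $z$-limit) is complete. Your instinct to flag this as the main obstacle was correct; it is an obstacle neither you nor the authors actually overcame.
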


\begin{proof}
Let $f(t)$ and $g(t)$ be as defined in Theorem \ref{thm:6}. Following 
Lemma \ref{lemma:1}, we have
$$
0 < e_{\left(\frac{c-b}{1+b\mu}\right)}(t,t_0) 
\leq \exp\left\{\int_{t_0}^t \frac{(c-b)(\tau)}{1+b(\tau)\mu(\tau)}\,\Delta\tau\right\}
\displaystyle \overset{\eqref{cond:1}}{\leq} e^l, \quad \text{for all}\,\,t \geq t_0. 
$$
Since $g(t) \geq 0$ (Theorem \ref{thm:6} and Lemma \ref{lemma:3}) one can apply
Lemma \ref{lemma:2} obtaining
\begin{equation*}
\begin{aligned}
e_g(t,t_0) &\geq 1 + \int_{t_0}^t g(\tau)\,\Delta\tau = 1 
+ \int_{t_0}^t \frac{b(\tau)\kappa}{e_{\frac{c-b}{1+b\mu}}(t,t_0) + \kappa}\, \Delta\tau\\
& \geq 1 + \frac{\kappa}{e^l + \kappa}\int_{t_0}^t b(\tau)\,\Delta\tau
\overset{\eqref{cond:2}}{\rightarrow} \infty.
\end{aligned}
\end{equation*}
So, as $t \rightarrow \infty$, $e_g(t,t_0) \rightarrow \infty$, which means that 
$e_{\ominus g}(t,t_0) \rightarrow 0$. Thus, from Theorem \ref{thm:6},
$$
\lim_ {t \rightarrow \infty} x(t) = \lim_ {t \rightarrow \infty} y(t) = 0,
\quad \lim_ {t \rightarrow \infty} z(t) = N.
$$
This ends the proof.
\end{proof}

\begin{corollary}
If $b(t) = c(t)$ for all $t \in \mathbb{T}$, then Theorem \ref{thm:7} holds 
provided
$$
\int_{t_0}^\infty b(\tau)\,\Delta\tau = \infty.
$$
\end{corollary}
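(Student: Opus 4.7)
The plan is to verify that the two hypotheses of Theorem~\ref{thm:7} hold under the assumption $b(t)=c(t)$ together with $\int_{t_0}^{\infty} b(\tau)\,\Delta\tau = \infty$, and then to invoke Theorem~\ref{thm:7} directly.

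First, I would observe that if $b(t)=c(t)$ for all $t \in \mathbb{T}$, then $(c-b)(t) \equiv 0$, so the integrand appearing in condition \eqref{cond:1} vanishes identically. Consequently,
\begin{equation*}
\int_{t_0}^{t} \frac{(c-b)(\tau)}{1 + b(\tau)\mu(\tau)}\,\Delta\tau = 0 \leq l,
\end{equation*}
for every $t \geq t_0$ and any positive constant $l$. Therefore condition \eqref{cond:1} is trivially satisfied (for instance with $l=1$).

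Next, the remaining hypothesis of the corollary is precisely condition \eqref{cond:2}:
\begin{equation*}
\int_{t_0}^{\infty} b(\tau)\,\Delta\tau = \infty.
\end{equation*}
With both hypotheses of Theorem~\ref{thm:7} in place, the conclusion of that theorem applies and all solutions of \eqref{dynamic:sir} converge to the extinction equilibrium $(0,0,N)$.

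There is no real obstacle here: the corollary is a direct specialization of Theorem~\ref{thm:7} in which the first integral condition becomes automatic by cancellation. The only subtle point worth a brief remark is that $b\equiv c$ does not degrade the regressivity framework established in Lemma~\ref{lemma:3} -- indeed, $f \equiv 0 \in \mathcal{R}^{+}$ and $g \geq 0 \in \mathcal{R}^{+}$ still hold -- so the exponential estimates used in the proof of Theorem~\ref{thm:7} remain valid without modification.
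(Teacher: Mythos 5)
Your proof is correct and is exactly the argument the paper intends: with $b\equiv c$ the integrand in condition \eqref{cond:1} vanishes, so that hypothesis holds trivially, and \eqref{cond:2} is the stated assumption, whence Theorem~\ref{thm:7} applies directly. The paper leaves this corollary without proof precisely because of this immediacy, and your added remark on regressivity is a harmless (and accurate) bonus.
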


The following corollary is a direct consequence of Theorem~\ref{thm:6}.

\begin{corollary}[Explicit solution to the discrete-time SIR model]	
Let $\mathbb{T} = h\mathbb{Z}$, $h > 0$. 
In this case, system \eqref{dynamic:sir} becomes
\begin{equation}
\label{discrete:hz}
\begin{cases}
\dfrac{x(t+1) - x(t)}{h} = -\dfrac{b(t)x(t+1)y(t)}{x(t) + y(t)},\\
\dfrac{y(t+1) - y(t)}{h} = \dfrac{b(t)x(t+1)y(t)}{x(t) + y(t)} - c(t)y(t+1),\\
\dfrac{z(t+1) - z(t)}{h} = c(t)y(t+1),
\end{cases}
\end{equation}
where $b, c : \mathbb{T} \rightarrow \mathbb{R}_0^+$, $x(t_0) = x_0 > 0,\, 
y(t_0) = y_0 > 0\,\, \text{and} \,\, z(t_0) = z_0 \geq 0.$
The unique solution of system \eqref{discrete:hz} is given by
\begin{equation}
\label{solution:discrete}
\begin{cases}
x(t_n) = x_0 \displaystyle \prod_{i = 0}^{n-1} \left(\frac{\prod_{j=0}^{i-1} 
\xi(t_j) + \kappa}{\prod_{j=0}^{i-1} \xi(t_j) + \kappa + b(t_i)\kappa\,h}
\right),\\
y(t_n) = \frac{y_0}{\prod_{i = 0}^{n-1} \xi(t_i)} \displaystyle 
\prod_{i = 0}^{n-1} \left(\frac{\prod_{j=0}^{i-1} \xi(t_j) + \kappa}
{\prod_{j=0}^{i-1} \xi(t_j) + \kappa + b(t_i)\kappa\,h}\right),\\
z(t_n) = N - \left(x_0 + \frac{y_0}{\prod_{i = 0}^{n-1} \xi(t_i)}\right)
\displaystyle \prod_{i = 0}^{n-1} \left(\frac{\prod_{j=0}^{i-1} \xi(t_j) + \kappa}
{\prod_{j=0}^{i-1} \xi(t_j) + \kappa + b(t_i)\kappa\,h}\right),
\end{cases}
\end{equation}
where $\kappa = \frac{y_0}{x_0}$, $N = x_0 + y_0 + z_0$ and 
$\xi(t) = \frac{1 + c(t)h}{1 + b(t)h}$.
\end{corollary}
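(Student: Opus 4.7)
The plan is to derive this as an immediate specialization of Theorem~\ref{thm:6}. First I would verify that \eqref{discrete:hz} is precisely what \eqref{dynamic:sir} becomes on $\mathbb{T} = h\mathbb{Z}$: the delta derivative $x^\Delta(t) = (x(t+h)-x(t))/h$, and $x^\sigma(t) = x(t+h)$, so substitution is immediate. Next, I would check that $f, g \in \mathcal{R}$ so that Theorem~\ref{thm:6} applies; this is already secured by Lemma~\ref{lemma:3}, which gives the stronger conclusion $f, g \in \mathcal{R}^+$.

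The bulk of the work is translating the time-scale exponentials into products. On $h\mathbb{Z}$ with $\mu \equiv h$, a standard induction using Theorem~\ref{thm:1} gives
$$e_p(t_n, t_0) = \prod_{i=0}^{n-1}\bigl(1 + h\, p(t_i)\bigr), \qquad e_{\ominus p}(t_n, t_0) = \prod_{i=0}^{n-1} \frac{1}{1 + h\, p(t_i)},$$
for any $p \in \mathcal{R}$. Applying this to $f(t) = (c-b)(t)/(1+b(t)h)$ yields the key simplification
$$1 + h\, f(t) = \frac{1 + c(t)h}{1 + b(t)h} = \xi(t),$$
so $e_f(t_n, t_0) = \prod_{j=0}^{n-1}\xi(t_j)$ and $e_{\ominus f}(t_n, t_0) = 1/\prod_{j=0}^{n-1}\xi(t_j)$.

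Plugging the value of $e_f(t_i, t_0) = \prod_{j=0}^{i-1}\xi(t_j)$ into the definition \eqref{eq:f:g} of $g$ gives $g(t_i) = b(t_i)\kappa / \bigl(\prod_{j=0}^{i-1}\xi(t_j) + \kappa\bigr)$, from which
$$1 + h\, g(t_i) = \frac{\prod_{j=0}^{i-1}\xi(t_j) + \kappa + h\, b(t_i)\kappa}{\prod_{j=0}^{i-1}\xi(t_j) + \kappa}.$$
Taking the reciprocal and forming the product over $i$ recovers exactly the expression for $x(t_n)/x_0$ in \eqref{solution:discrete}. The formula for $y(t_n)$ then follows by multiplying by $\kappa\, e_{\ominus f}(t_n, t_0) = \kappa/\prod_{i=0}^{n-1}\xi(t_i)$ (using $y(t) = \kappa\, e_{\ominus f}(t,t_0) x(t)$ established in the proof of Theorem~\ref{thm:6}, noting $\kappa x_0 = y_0$), and the formula for $z(t_n)$ follows from $z = N - x - y$, as in Remark~\ref{rem:constant:t}.

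There is no real obstacle here beyond careful bookkeeping of the nested products; the only subtlety is that $g(t_i)$ depends on $e_f(t_i, t_0)$, which is itself a partial product $\prod_{j=0}^{i-1}\xi(t_j)$, so the resulting expression has a double-product structure. Once that dependence is written out explicitly, the identification with \eqref{solution:discrete} is just matching terms.
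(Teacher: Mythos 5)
Your proposal is correct and follows exactly the route the paper intends: the paper states this corollary as a direct consequence of Theorem~\ref{thm:6}, and your specialization to $\mathbb{T}=h\mathbb{Z}$ via $e_p(t_n,t_0)=\prod_{i=0}^{n-1}\bigl(1+h\,p(t_i)\bigr)$, the identity $1+h\,f(t)=\xi(t)$, and the substitution of $e_f(t_i,t_0)=\prod_{j=0}^{i-1}\xi(t_j)$ into $g$ supplies precisely the bookkeeping the paper omits. All the computations check out, including the use of Lemma~\ref{lemma:3} to secure regressivity and of $y=\kappa\,e_{\ominus f}(\cdot,t_0)\,x$ and $z=N-x-y$ for the last two formulas.
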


\begin{remark}
If $b,\, c \in \mathbb{R}^+_0$, then solution \eqref{solution:discrete} 
coincides with the one first obtained in \cite{sir:discrete}.
\end{remark}

\begin{example}
\label{ex:1}
To illustrate Theorem~\ref{thm:7}, consider a disease with a periodic transmission 
rate, for example, due to seasonal factors that influence how people come into 
contact with the disease or how susceptible they are. In this scenario, 
let us consider 
$$
b(t) = 0.8 + 0.6\sin(mt), \quad m \in \mathbb{R}\backslash\{0\}.
$$ 
Considering medical advances over time, we let $c(t) = \frac{t}{1+t}$. 
For $\mathbb{T} = h\mathbb{Z}$, $h > 0$, 
the dynamics of system \eqref{solution:discrete} is as given in 
Fig.~\ref{zero:convergence}: the solution converges to the extinction
of susceptible and infected individuals.
\end{example}

\begin{figure}[ht]
\centering
\includegraphics[scale=0.7]{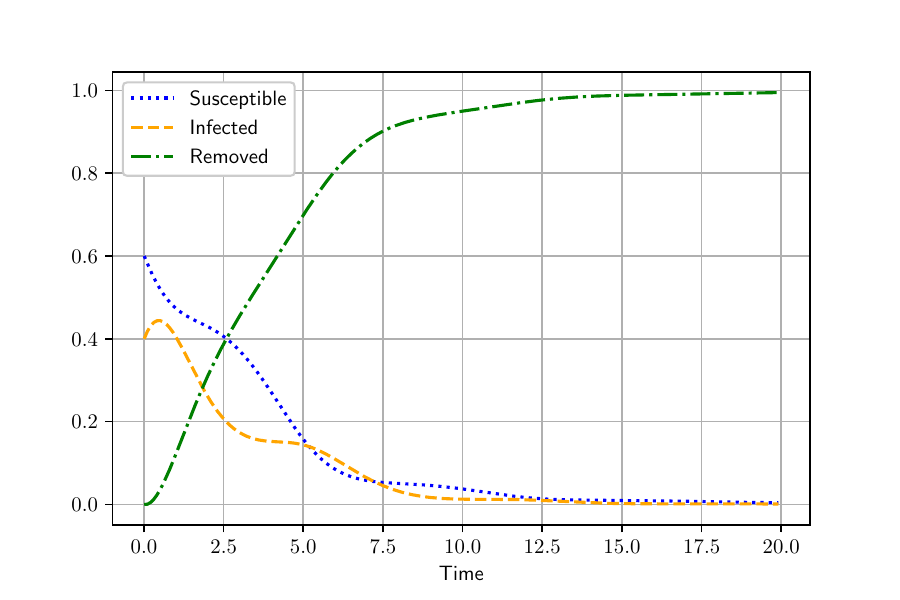}
\caption{Dynamics of Example \ref{ex:1} with $x_0 = 0.6$, $y_0 = 0.4$ and $m = -1$.}
\label{zero:convergence} 
\end{figure}

\begin{theorem}[Asymptotic stability of the disease-free equilibrium]
\label{thm:8}
Consider the SIR model \eqref{dynamic:sir} 
and let $\mathbb{T}$ be unbounded from above. Assume 
\begin{equation}
\label{cond:3}
\exists\, m > 0 : b(t) \leq \frac{m(c-b)(t)}{1 + b(t)\mu(t)},
\quad \text{for all}\,\, t \in \mathbb{T},
\end{equation}
and 
\begin{equation}
\label{cond:4}
\int_{t_0}^\infty \frac{(c-b)(t)}{1 + b(t)\mu(t)} \Delta\tau = \infty.
\end{equation}
Thus, all solutions of \eqref{dynamic:sir} converge to the equilibrium
$(\alpha, 0, N-\alpha)$, where $0<~\alpha~\leq~N$.  
\end{theorem}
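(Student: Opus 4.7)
The plan is to extract the asymptotics of $x$ and $y$ directly from the explicit solution in Theorem~\ref{thm:6},
\[
x(t)=\frac{x_0}{e_g(t,t_0)},\qquad y(t)=\frac{y_0}{e_f(t,t_0)\,e_g(t,t_0)},
\]
and then to recover $z$ through $z=N-x-y$. Throughout I can rely on Lemma~\ref{lemma:3} to place $f,g$ in $\mathcal{R}^+$, on the product identity $e_{f\oplus g}=e_f\,e_g$ from Theorem~\ref{thm:3}, and on the observation that \eqref{cond:3} yields $b(1+b\mu)\le m(c-b)$, so $f(t)\ge 0$; then $g(t)\ge 0$ follows at once from its definition. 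In particular the lower bound of Lemma~\ref{lemma:2} is available for both $f$ and $g$.

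First I would establish $y(t)\to 0$. Lemma~\ref{lemma:2} gives
\[
e_f(t,t_0)\;\ge\;1+\int_{t_0}^t f(\tau)\,\Delta\tau,
\]
and \eqref{cond:4} forces the right-hand side to diverge. Combined with $e_g(t,t_0)\ge 1$, this forces $e_{f\oplus g}(t,t_0)\to\infty$, and hence $y(t)\to 0$.

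The heart of the proof is showing that $x(t)\to\alpha$ with $\alpha>0$, equivalently that the non-decreasing function $t\mapsto e_g(t,t_0)$ has a finite limit. Via Lemma~\ref{lemma:1} this reduces to the bound $\int_{t_0}^\infty g(\tau)\,\Delta\tau<\infty$. Applying \eqref{cond:3} in the form $b\le m f$,
\[
g(t)\;\le\;\frac{m\,\kappa\,f(t)}{e_f(t,t_0)+\kappa},
\]
and I would then exploit the telescoping delta-derivative
\[
\Bigl(-\frac{1}{e_f(\cdot,t_0)+\kappa}\Bigr)^{\Delta}\!(t)=\frac{f(t)\,e_f(t,t_0)}{\bigl(e_f(t,t_0)+\kappa\bigr)\bigl(e_f(\sigma(t),t_0)+\kappa\bigr)},
\]
whose integral from $t_0$ to $t$ equals $\tfrac{1}{1+\kappa}-\tfrac{1}{e_f(t,t_0)+\kappa}\le\tfrac{1}{1+\kappa}$, uniformly in $t$. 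Comparing this integrand with $f/(e_f+\kappa)$, using $e_f\ge 1$ so that $e_f/(e_f+\kappa)\ge 1/(1+\kappa)$, I aim to extract a finite constant $C(\kappa)$ with $\int_{t_0}^\infty g\le m\kappa\,C(\kappa)<\infty$.

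Once $\int g<\infty$ is in hand, Lemma~\ref{lemma:1} bounds $e_g(t,t_0)$ above by some $L\ge 1$, so $x(t)\to\alpha:=x_0/L\in(0,x_0]$, and Remark~\ref{rem:constant:t} delivers $z(t)\to N-\alpha$, closing the argument. I expect the main obstacle to be precisely this last integrability step: on a general time scale, the shift $e_f(\sigma(t),t_0)=(1+\mu(t)f(t))\,e_f(t,t_0)$ appearing in the denominator of the telescoping identity does not cancel cleanly against the target integrand $f/(e_f+\kappa)$, so the comparison has to be done carefully, essentially trading the factor of order $1+\mu(t)f(t)+\kappa$ for the uniformly bounded primitive $\tfrac{1}{1+\kappa}-\tfrac{1}{e_f(t,t_0)+\kappa}$ without losing integrability.
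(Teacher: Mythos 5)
Your overall strategy is the same as the paper's: deduce $f\ge 0$ from \eqref{cond:3}, get $e_f(t,t_0)\to\infty$ from \eqref{cond:4} via Lemma~\ref{lemma:2} (hence $y\to 0$), and then try to prove $\int_{t_0}^{\infty}g(\tau)\,\Delta\tau<\infty$ so that $e_g$ stays bounded and $x\to\alpha>0$. The first half is complete and correct. The second half, however, is only announced, not carried out: you say you ``aim to extract a finite constant $C(\kappa)$'' and you yourself flag the $\sigma$-shift as the main obstacle. That obstacle is not a technicality that can be traded away. To dominate $g\le m\kappa f/(e_f+\kappa)$ by a constant multiple of your telescoping integrand $f e_f/\bigl((e_f+\kappa)(e_f^{\sigma}+\kappa)\bigr)$ you would need
\[
\frac{e_f(\sigma(t),t_0)+\kappa}{e_f(t,t_0)}=1+\mu(t)f(t)+\kappa\, e_{\ominus f}(t,t_0)\le C(\kappa),
\]
i.e.\ $\sup_t \mu(t)f(t)<\infty$ (equivalently, since $1+\mu f=\tfrac{1+\mu c}{1+\mu b}$, a bound on $\mu(t)c(t)$). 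Nothing in \eqref{cond:3}--\eqref{cond:4} provides this, so as written your argument has a genuine gap at its decisive step.

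Worse, the gap cannot be closed from the stated hypotheses alone. Take $\mathbb{T}=\mathbb{Z}$ (so $\mu\equiv 1$), define $b$ recursively by $b(t_0)=1$ and $b(t_n)=\prod_{j<n}\bigl(1+b(t_j)\bigr)$, and set $c=b(2+b)$. Then $f=\tfrac{c-b}{1+b}=b$, so \eqref{cond:3} holds with $m=1$ and \eqref{cond:4} holds; yet $e_f(t_n,t_0)=\prod_{j<n}(1+b(t_j))=b(t_n)$, whence $g(t_n)=\kappa b(t_n)/(b(t_n)+\kappa)\ge\kappa/(1+\kappa)$ for every $n$, so $e_g(t_n,t_0)\to\infty$ and $x(t_n)\to 0$ rather than to a positive $\alpha$. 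Some extra assumption (e.g.\ bounded $\mu f$, which is automatic for $\mathbb{T}=\mathbb{R}$) is therefore unavoidable. For what it is worth, the paper's own proof stumbles at exactly the same point: it evaluates $\int_{t_0}^{t}f(\tau)e_{\ominus f}(\tau,t_0)\,\Delta\tau$ as if Theorem~\ref{thm:4} applied with $e_f(\tau,t_0)$ in place of $e_f(\sigma(\tau),t_0)$ (and with the sign reversed); the correct identity is $\int_{t_0}^{t}f(\tau)\,e_f(t_0,\sigma(\tau))\,\Delta\tau=1-e_f(t_0,t)$. So your instinct about where the difficulty lives is exactly right, but identifying the obstacle is not the same as overcoming it, and your proposal does not overcome it.
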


\begin{proof}
Let $f(t)$ and $g(t)$ be as defined in Theorem~\ref{thm:6}
by \eqref{eq:f:g}. First, note that 
\eqref{cond:3} implies that $c(t) \geq b(t)$, for all $t \in \mathbb{T}$, 
and thus $f(t) \geq 0$. Following Lemma \ref{lemma:2}, we get
$$
e_f(t,t_0) \geq 1 + \int_{t_0}^t f(\tau) \Delta\tau 
\overset{\eqref{cond:4}}{\rightarrow} \infty.
$$
This means that
$\displaystyle \lim_{t\rightarrow \infty} e_{\ominus f}(t,t_0) = 0$. 
Now note that 
$$
g(t) = \frac{b(t)\kappa}{e_{f}(t,t_0) + \kappa}
\leq \frac{b(t)\kappa}{e_{f}(t,t_0)},
$$
and by condition \eqref{cond:3}, 
$$
g(t) \leq m\kappa f(t) \cdot 
\frac{1}{e_{f}(t,t_0)}.
$$
Thus, 
\begin{equation*}
\begin{aligned}
\int_{t_0}^t g(\tau) \Delta\tau 
&\leq m\kappa \int_{t_0}^{t} f(\tau)
\cdot \frac{1}{e_{f}(\tau,t_0)}\,\Delta\tau\\
&= m\kappa \left[\frac{1}{e_{f}(t,t_0)} - 1\right]
\leq m\kappa.
\end{aligned}
\end{equation*} 
Next, since $g(t) \geq 0$ (Theorem \ref{thm:6} and Lemma \ref{lemma:3}), it 
follows from Lemma \ref{lemma:2} that
$$
1 \leq 1 + \int_{t_0}^t g(\tau)\,\Delta\tau \leq e_g(t,t_0)
\leq \exp\left\{\int_{t_0}^t g(\tau)\,\Delta\tau\right\} \leq e^{m\kappa}.
$$ 
Therefore, $\lim_{t \rightarrow \infty} e_g(t,t_0)$ exists and it is bounded 
from below by 1 and from above by $e^{m\kappa}$. It is now straightforward to 
note that 
$$
\lim_{t \rightarrow \infty} e_{\ominus g}(t,t_0) = e^{-m\kappa} > 0.
$$
Thus, from Theorem~\ref{thm:6}, it follows that
$$
\lim_{t \rightarrow \infty} x(t) = \alpha > 0,\quad 
\lim_{t \rightarrow \infty} y(t) = 0,\quad
\lim_{t \rightarrow \infty} z(t) = N - \alpha,
$$
ending our proof. 
\end{proof}

An important time scale in physics is the quantum time scale
$\mathbb{T} = q^{\mathbb{N}_0}$ with $q > 1$ \cite{quantum:calculus}.
In this case, the delta-derivative of Definition~\ref{delta:derv}
reduces to the quantum derivative  
$$
D_qf(x) = \frac{f(qx) - f(x)}{(q-1)t}.
$$
The points of the quantum time-scale are denoted by
\begin{equation}
\label{remark:2}
t_0, \quad t_1 = qt_0, \quad t_2 = qt_1 = q^2t_0,\quad \ldots, \quad t_n = q^nt_0,
\end{equation}
where $t_0$ is some given positive real number.

As a corollary of Theorem~\ref{thm:6}, we obtain the exact solution of 
the quantum SIR model. 

\begin{corollary}[Explicit solution to the quantum SIR model]
\label{quantum:exact}
Let  $\mathbb{T} = q^{\mathbb{N}_0}$ and $q > 1$. Then,
\eqref{dynamic:sir} reduces to the form
\begin{equation}
\label{quantum}
\begin{cases}
\dfrac{x(qt) - x(t)}{(q-1)t} = -\dfrac{b(t)x(qt)y(t)}{x(t) + y´(t)},\\
\dfrac{y(qt) - y(t)}{(q-1)t} = \dfrac{b(t)x(qt)y(t)}{x(t) + y(t)} - c(t)y(qt),\\
\dfrac{z(qt) - z(t)}{(q-1)t} = c(t)y(qt),
\end{cases}
\end{equation}
where $b, c : \mathbb{T} \rightarrow \mathbb{R}_0^+$, $x(t_0),\, y(t_0) > 0$ 
and $z(t_0) \geq 0$.
The unique solution of system \eqref{quantum} is given by 
\begin{equation}
\label{solution:quantum}
\begin{cases}
x(t_n) = x(t_0) \displaystyle\prod_{s=0}^{n-1} 
\frac{\prod_{i=0}^{s-1} \xi(t_i) + \kappa}
{\prod_{i=0}^{s-1} \xi(t_i) + (q-1)b(s)s\kappa  + \kappa},\\
y(t_n) = \frac{y(t_0)}{\prod_{s=0}^{n-1} \xi(t_s)} 
\displaystyle\prod_{s=0}^{n-1} 
\frac{\prod_{i=0}^{s-1} \xi(t_i) + \kappa}
{\prod_{i=0}^{s-1} \xi(t_i) + (q-1)b(s)s\kappa  + \kappa},\\
z(t) = N - \left(x(t_0) + \frac{y(t_0)}{\prod_{s=0}^{n-1} \xi(t_s)}\right)
\displaystyle\prod_{s=0}^{n-1} 
\frac{\prod_{i=0}^{s-1} \xi(t_i) + \kappa}
{\prod_{i=0}^{s-1} \xi(t_i) + (q-1)b(s)s\kappa  + \kappa},
\end{cases}
\end{equation}
where $\kappa = \frac{y_0}{x_0}$, $N = x_0 + y_0 + z_0$ and 
$\xi(t) = \frac{1 + c(t)(q-1)t}{1 + b(t)(q-1)t}$ and $t$ being as defined
in \eqref{remark:2}.
\end{corollary}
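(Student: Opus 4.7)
The plan is to specialize Theorem~\ref{thm:6} to the quantum time scale $\mathbb{T}=q^{\mathbb{N}_0}$, on which $\sigma(t)=qt$, $\mu(t)=(q-1)t$, and the points are $t_n=q^n t_0$. First, to recover \eqref{quantum} from \eqref{dynamic:sir}, I simply substitute the quantum derivative $x^\Delta(t)=(x(qt)-x(t))/((q-1)t)$ together with $x^\sigma(t)=x(qt)$ (and analogously for $y$ and $z$) into each of the three equations of the dynamic SIR system; this step is a direct unfolding of the definitions and requires no real argument.

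The core of the proof is the derivation of \eqref{solution:quantum} from \eqref{dynamic:solution}. The key tool is the product representation of the time-scale exponential on a purely discrete time scale, namely $e_p(t_n,t_0)=\prod_{i=0}^{n-1}\bigl(1+\mu(t_i)p(t_i)\bigr)$, which follows by iterating $e_p(\sigma(t),t_0)=(1+\mu(t)p(t))\,e_p(t,t_0)$. Applied to $f$ in \eqref{eq:f:g}, a one-line algebraic simplification gives $1+\mu(t)f(t)=(1+c(t)\mu(t))/(1+b(t)\mu(t))$, which on $\mathbb{T}=q^{\mathbb{N}_0}$ is exactly $\xi(t)$. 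Hence $e_f(t_n,t_0)=\prod_{i=0}^{n-1}\xi(t_i)$, and $e_{\ominus f}(t_n,t_0)$ is its reciprocal by Theorem~\ref{thm:3}.

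Next, I apply the same product formula to $g$ in \eqref{eq:f:g}. A direct computation yields $1+\mu(t)g(t)=(e_f(t,t_0)+\kappa+\mu(t)b(t)\kappa)/(e_f(t,t_0)+\kappa)$; substituting the closed form of $e_f$ obtained above together with $\mu(t_s)=(q-1)t_s$ gives $e_g(t_n,t_0)$ as a nested product, and inverting factorwise produces the expression for $e_{\ominus g}(t_n,t_0)$ that appears in every line of \eqref{solution:quantum}. The formulas for $x(t_n)$ and $y(t_n)$ then follow immediately from Theorem~\ref{thm:6} upon using $e_{\ominus(f\oplus g)}=e_{\ominus f}\cdot e_{\ominus g}$ (Theorem~\ref{thm:3}), and $z(t_n)$ is obtained from the conservation identity $z=N-x-y$. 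Lemma~\ref{lemma:3} guarantees $f,g\in\mathcal{R}^+$, so every denominator encountered is strictly positive and all the manipulations are legitimate. The only genuine difficulty is notational bookkeeping: one must carefully identify the inner product $\prod_{i=0}^{s-1}\xi(t_i)$ as $e_f(t_s,t_0)$ evaluated at the $s$-th node of the outer product indexed by $s$, and keep the two indices from colliding.
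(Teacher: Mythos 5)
Your proposal is correct and follows exactly the route the paper intends: the corollary is stated as a direct specialization of Theorem~\ref{thm:6}, and your use of the product representation $e_p(t_n,t_0)=\prod_{i=0}^{n-1}(1+\mu(t_i)p(t_i))$ together with $1+\mu(t)f(t)=\xi(t)$ and the factorization $e_{\ominus(f\oplus g)}=e_{\ominus f}e_{\ominus g}$ is the standard computation. Your derivation in fact yields the factor $(q-1)\,b(t_s)\,t_s\,\kappa$, confirming that the paper's ``$(q-1)b(s)s\kappa$'' is a notational slip for $(q-1)b(t_s)t_s\kappa$.
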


\begin{example}
\label{ex:2}
To illustrate Theorem~\ref{thm:8}, we opt for the probability density function 
of the log-normal distribution and for a ``von Bertalanffy'' type function to 
model the time-varying parameters $b$ and $c$, respectively. By doing so, we 
are modeling a transmission rate that grows very rapidly at the onset of the 
epidemic before declining, a pattern commonly observed due to early ignorance 
followed by increasing precautions taken by the susceptible population. On the 
other hand, the use of a ``von Bertalanffy'' function suggests that we are
modeling a removal rate that begins at a very low level, improves rapidly due 
to adaptations, and eventually approaches a natural upper limit. More precisely,
let 
$$
b(t) = \frac{1}{t\sqrt{2\pi}\sigma} e^{-\frac{(\ln(t) - \mu)^2}{2\sigma^2}},\quad
c(t) = \gamma_\infty(1 - e^{-kt - a}).
$$
Specifically, we choose $\mathbb{T} = q^{\mathbb{N}_0}$ with $q > 1$,
$\sigma = 0.7$ $\mu = 1$, $\gamma_\infty = 0.3$,
$k = 0.5$ and $a = 0.3$. The corresponding dynamics of system \eqref{quantum} 
is illustrated in Fig.~\ref{alpha:convergence}, showing convergence
to the disease-free equilibrium.
\end{example}

\begin{figure}[ht]
\centering
\includegraphics[scale=0.7]{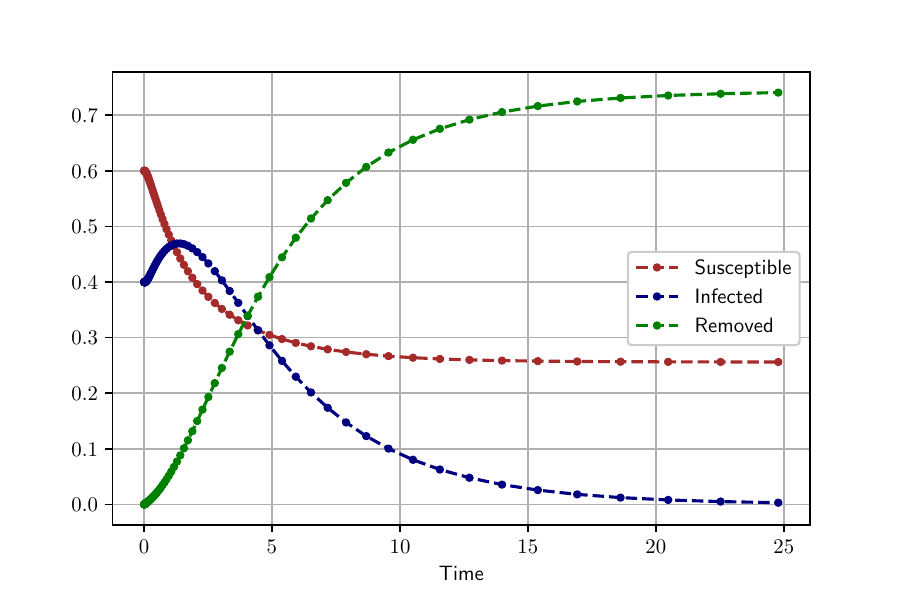}
\caption{Dynamics of Example \ref{ex:2} with $x_0 = 0.6$, $y_0 = 0.4$ and $q = 1.1$.}
\label{alpha:convergence} 
\end{figure}

The reproduction number $\mathcal{R}_0$, defined as the expected number of 
secondary cases produced by a single infection in a fully susceptible population, 
is one of the most important thresholds in mathematical epidemiology. It is 
well-known that for system \eqref{sir:continuous} with positive constant rates one has
\begin{equation}
\label{eq:brn}
\mathcal{R}_0 = \frac{b}{c}
\end{equation}
(see, e.g., \cite{R0}). This quantity remains
invariant across any arbitrary time-scale, as it is determined solely by the 
dynamics of the disease, independent of time. 

\begin{corollary}
If $b,\,c \in \mathbb{R}^+_0$, then the conclusion of Theorem \ref{thm:7} remains 
valid provided $\mathcal{R}_0 \geq 1$. 
\end{corollary}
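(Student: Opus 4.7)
The plan is to verify that the two hypotheses \eqref{cond:1} and \eqref{cond:2} of Theorem~\ref{thm:7} reduce, in the constant-coefficient case, to the single condition $\mathcal{R}_0 \geq 1$, so that Theorem~\ref{thm:7} can be applied directly. The key observation is that, for $b,c\in\mathbb{R}^+_0$ with $c>0$, the bound $\mathcal{R}_0 = b/c \geq 1$ is equivalent to $c - b \leq 0$.

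First I would address \eqref{cond:1}. With $b$ and $c$ constant and $c - b \leq 0$, the integrand $(c-b)/(1+b\mu(\tau))$ is non-positive for every $\tau\in\mathbb{T}$, since $1+b\mu(\tau)>0$. Hence
\begin{equation*}
\int_{t_0}^t \frac{(c-b)}{1 + b\mu(\tau)}\,\Delta\tau \,\leq\, 0,\qquad \text{for all }t\geq t_0,
\end{equation*}
so \eqref{cond:1} holds with, e.g., $l = 1$.

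Next I would verify \eqref{cond:2}. Note that $\mathcal{R}_0\geq 1$ forces $b\geq c>0$, in particular $b>0$. Since $\mathbb{T}$ is unbounded from above and $b$ is a positive constant, the basic time-scale identity $\int_{t_0}^{T}\Delta\tau = T - t_0$ yields
\begin{equation*}
\int_{t_0}^{\infty} b\,\Delta\tau = b\lim_{T\to\infty}(T-t_0) = \infty,
\end{equation*}
establishing \eqref{cond:2}. With both hypotheses in place, Theorem~\ref{thm:7} delivers convergence of every solution of \eqref{dynamic:sir} to the extinction equilibrium $(0,0,N)$.

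I do not expect a serious obstacle here: the corollary is essentially a rewording of Theorem~\ref{thm:7} under constant rates, and the only mildly delicate point is the implicit assumption $c>0$ needed to make $\mathcal{R}_0$ meaningful; this is automatic from $\mathcal{R}_0\geq 1$ together with $c\in\mathbb{R}^+_0$ and $b$ finite.
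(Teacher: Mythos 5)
Your proposal is correct and is exactly the argument the paper leaves implicit: with constant rates, $\mathcal{R}_0 = b/c \geq 1$ forces $c>0$ and $b\geq c>0$, so the integrand in \eqref{cond:1} is non-positive (making the integral bounded above by any $l>0$) and $\int_{t_0}^\infty b\,\Delta\tau = b\lim_{T\to\infty}(T-t_0)=\infty$ gives \eqref{cond:2}, whence Theorem~\ref{thm:7} applies. Your side remark that $c>0$ is implicitly required for $\mathcal{R}_0$ to be meaningful is a fair and correct observation.
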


\begin{corollary}
If $b,\,c \in \mathbb{R}^+_0$, then the conclusion of Theorem \ref{thm:8} holds
provided $\mathcal{R}_0 < 1$.
\end{corollary}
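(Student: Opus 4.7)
The plan is to derive this corollary as a direct consequence of Theorem~\ref{thm:8} by verifying the two hypotheses \eqref{cond:3} and \eqref{cond:4} from the quantitative assumption $\mathcal{R}_0 < 1$. Recalling \eqref{eq:brn}, the hypothesis $b/c < 1$ is equivalent to $c > b \geq 0$ and, in particular, $c - b > 0$, which is the key inequality I would exploit throughout.

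For \eqref{cond:3}, I would rewrite the required inequality $b \leq \frac{m(c-b)}{1+b\mu(t)}$ in the equivalent form
\[
m \geq \frac{b\bigl(1 + b\mu(t)\bigr)}{c-b},
\]
and set $\mu^* := \sup_{t \in \mathbb{T}} \mu(t)$. On time scales with bounded graininess---notably the continuous setting $\mathbb{T} = \mathbb{R}$ (with $\mu^* = 0$) and the discrete setting $\mathbb{T} = h\mathbb{Z}$ (with $\mu^* = h$)---the choice $m := \frac{b(1+b\mu^*)}{c-b}$ makes \eqref{cond:3} hold uniformly in $t$. The same upper bound on $\mu$ immediately yields a strictly positive lower bound $\frac{c-b}{1+b\mu(\tau)} \geq \frac{c-b}{1+b\mu^*} > 0$, so on any time scale unbounded from above,
\[
\int_{t_0}^\infty \frac{c-b}{1+b\mu(\tau)}\,\Delta\tau = \infty,
\]
which is exactly \eqref{cond:4}. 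Theorem~\ref{thm:8} then applies verbatim and delivers convergence of all solutions to the disease-free equilibrium $(\alpha, 0, N-\alpha)$ with $0 < \alpha \leq N$.

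The only subtle point, and what I expect to be the main obstacle in a completely general statement, is that \eqref{cond:3} genuinely forces the graininess $\mu$ to be bounded: no finite $m$ can work if $\mu^* = \infty$ (as on the quantum time scale $q^{\mathbb{N}_0}$, where $\mu(t) = (q-1)t$). Consequently the corollary is naturally read as referring to the principal time-scale settings emphasized in the paper, for which $\mu$ is uniformly bounded and both verifications above are one-line; settings with unbounded graininess would require a separate direct argument, for instance by estimating $g$ pointwise via the explicit representation \eqref{dynamic:solution} and controlling $e_g(t,t_0)$ through the exponential growth of $e_f(t,t_0)$ supplied by the strict inequality $c>b$.
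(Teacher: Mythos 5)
The paper states this corollary with no proof at all, and your reduction to verifying hypotheses \eqref{cond:3} and \eqref{cond:4} of Theorem~\ref{thm:8} from $c>b>0$ (equivalently $\mathcal{R}_0<1$) is exactly the intended argument; with $m=\frac{b(1+b\mu^*)}{c-b}$ and the uniform positive lower bound on the integrand of \eqref{cond:4}, your verification is complete whenever the graininess is bounded. Your caveat about unbounded graininess is not only correct but sharper than you state: it is a limitation of the corollary itself, not merely of this method of proof. Indeed, on $\mathbb{T}=q^{\mathbb{N}_0}$ with constant $0<b<c$ one has $e_f(t_n,t_0)=\prod_{i=0}^{n-1}\frac{1+c\mu(t_i)}{1+b\mu(t_i)}\sim C\,(c/b)^n$ while $\mu(t_n)=(q-1)q^n t_0$, so $\mu(t_n)\,g(t_n)\sim C'\,(qb/c)^n$; if $q>c/b=1/\mathcal{R}_0$ this forces $\int_{t_0}^{\infty}g(\tau)\,\Delta\tau=\infty$, hence $e_g(t,t_0)\to\infty$ by Lemma~\ref{lemma:2} and $x(t)\to 0$, so the solution tends to the extinction equilibrium $(0,0,N)$ rather than to a disease-free equilibrium with $\alpha>0$. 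Reading the corollary as restricted to time scales with bounded graininess, as you propose, is therefore the right resolution, and in that setting your proof is complete.
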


In the following result, we 
establish a relationship between the infection
and removed rates and the behavior of the 
infected population. 

\begin{theorem}[Necessary conditions for the monotonic behavior of the infected population]
\label{thm:9}
For all $t \in \mathbb{T}$, the following statements are true:
\begin{itemize}
\item if $\frac{b(t)}{c(t)} > \frac{x(t)+y(t)}{x^{\sigma}(t)}$, then $y^{\Delta} > 0$;
\item if $\frac{b(t)}{c(t)} < \frac{x(t)+y(t)}{x^{\sigma}(t)}$, then $y^{\Delta} < 0$;
\item if $\frac{b(t)}{c(t)} = \frac{x(t)+y(t)}{x^{\sigma}(t)}$, then the infected 
population remains constant.    
\end{itemize}	
\end{theorem}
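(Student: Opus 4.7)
The plan is to extract $y^\Delta$ from the second equation of system \eqref{dynamic:sir} in closed form, and then read off its sign directly.

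First, I would exploit the general time-scale identity $y^\sigma = y + \mu\, y^\Delta$, valid because $y$ is delta-differentiable, to rewrite the second equation of \eqref{dynamic:sir} with $y^\sigma$ eliminated in favor of $y^\Delta$. Substituting and collecting the $y^\Delta$ terms on the left-hand side yields
\begin{equation*}
\bigl(1 + c(t)\mu(t)\bigr)\, y^\Delta(t) = \frac{y(t)\,\bigl[b(t)\,x^\sigma(t) - c(t)(x(t)+y(t))\bigr]}{x(t)+y(t)}.
\end{equation*}
From this compact expression, the entire theorem is essentially a sign reading.

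Next, I would argue that the sign of $y^\Delta(t)$ is governed purely by the bracketed factor $b(t)x^\sigma(t) - c(t)(x(t)+y(t))$. Indeed, Proposition~\ref{prop:nonneg} gives $x(t), y(t) > 0$, hence $x(t)+y(t) > 0$, and since $c(t) \geq 0$ and $\mu(t) \geq 0$ the factor $1 + c(t)\mu(t)$ is strictly positive. Consequently the three cases of the theorem correspond exactly to the three possible signs of that bracket. When $c(t) > 0$ and $x^\sigma(t) > 0$, dividing by $c(t)\,x^\sigma(t)$ recasts the comparison $b(t)x^\sigma(t) \gtreqless c(t)(x(t)+y(t))$ as $b(t)/c(t)\, \gtreqless\, (x(t)+y(t))/x^\sigma(t)$, which is precisely the form stated in the theorem; in the equality case one reads $y^\Delta(t) = 0$, the intended meaning of the infected population being (instantaneously) constant.

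There is no deep obstacle here: all the positivity facts needed have already been established earlier in the paper, namely the positivity of $x$ and $y$ in Proposition~\ref{prop:nonneg} and the non-negativity of $c$ built into the model. The only minor point worth flagging is the degenerate case $c(t) = 0$, where the ratio $b(t)/c(t)$ is undefined; there the derived formula simply reduces to $y^\Delta(t) = b(t) x^\sigma(t) y(t)/(x(t)+y(t)) \geq 0$, with equality if and only if $b(t) = 0$, consistent with the intended content of the theorem provided the ratio form is implicitly replaced by the equivalent product-form comparison $b(t) x^\sigma(t)$ versus $c(t)(x(t)+y(t))$.
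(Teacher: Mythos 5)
Your proposal is correct and follows essentially the same route as the paper: substitute $y^\sigma = y + \mu\,y^\Delta$ into the second equation of \eqref{dynamic:sir}, solve for $y^\Delta$, and read off its sign from the factor $b(t)x^\sigma(t) - c(t)\bigl(x(t)+y(t)\bigr)$. Your extra remark on the degenerate case $c(t)=0$ is a small refinement the paper glosses over, but it does not change the argument.
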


\begin{proof}
Consider the dynamic SIR model \eqref{dynamic:sir}. 
Since $y$ is differentiable at $t$, then 
$y^{\sigma}(t) = y(t) + \mu(t)y^{\Delta}(t)$. Thus, the second equation of 
\eqref{dynamic:sir} becomes 
\begin{equation*}
\begin{aligned}
y^{\Delta} &= \frac{b(t)x^{\sigma}(t)y(t)}{x(t) + y(t)} - c(t)[y(t) + \mu(t)y^\Delta(t)]\\
\Leftrightarrow y^{\Delta}(t) &= \frac{y(t)}{1+c(t)\mu(t)}
\left(\frac{b(t)x^{\sigma}(t)}{x(t)+y(t)} - c(t)\right).
\end{aligned}
\end{equation*}
Since $y(t),\,c(t),\,\mu(t)$ are positive, the behavior of $y$ depends solely on 
the value of 
$$
\frac{b(t)x^{\sigma}(t)}{x(t)+y(t)} - c(t).
$$
Clearly, if $\frac{b(t)x^{\sigma}(t)}{x(t)+y(t)} > c(t)$, then $y^{\Delta} > 0$. This
is equivalent to 
$$
\frac{b(t)}{c(t)} > \frac{x(t)+y(t)}{x^{\sigma}(t)}.
$$
On the other hand, if 
$$
\frac{b(t)x^{\sigma}(t)}{x(t)+y(t)} < c(t),
$$ 
then $y^{\Delta} < 0$. Finally, if 
$$
\frac{b(t)x^{\sigma}(t)}{x(t)+y(t)} = c(t),
$$
then $y^{\Delta} = 0$, which corresponds to a constant population. 
The result is proved. 
\end{proof}

We now obtain the exact solution of system \eqref{dynamic:sir} when 
$\mathbb{T} = \mathbb{R}$. This result is a direct consequence of 
Theorem~\ref{thm:6} and was first obtained in \cite{sir:ts}.

\begin{corollary}[Explicit solution to the continuous SIR model with time-dependent 
infection and removed rates]
The unique solution of the continuous SIR system 
\begin{equation}
\label{sir:continuous2}
\begin{cases}
x'(t) = -\dfrac{b(t) x(t) y(t)}{x(t)+y(t)}, \\
y'(t) = \dfrac{b(t) x(t) y(t)}{x(t)+y(t)} - c(t)y(t), \\
z'(t) = c(t)y(t),
\end{cases}
\end{equation}
is given by
\begin{equation}
\label{exact:cont}
\begin{cases}
\displaystyle
x(t) = x_0 \exp\left\{-\kappa \int_{t_0}^t b(s) 
\left(e^{\int_{t_0}^{s}(c-b)(\tau)\,d\tau} + \kappa\right)^{-1}\,ds\right\},\\
\displaystyle
y(t) = y_0 \exp\left\{\int_{t_0}^{t}\left[b(s)\left(1 + \kappa
e^{\int_{t_0}^{s}(b-c)(\tau)\,d\tau}\right)^{-1} - c(s)\right]\,ds\right\},\\
\displaystyle
z(t) = N - \left(y_0e^{\int_{t_0}^t (b-c)(s)\,ds} + x_0\right)
\exp\left\{-\kappa\int_{t_0}^{t} b(s)\left(\kappa + 
e^{\int_{t_0}^s (c-b)(\tau)\,d\tau}\right)^{-1}\,ds\right\},
\end{cases}
\end{equation}
where $N = x_0 + y_0 + z_0$ and $\kappa = \frac{y_0}{x_0}$.
\end{corollary}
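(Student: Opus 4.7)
The plan is to obtain this corollary directly as a specialization of Theorem~\ref{thm:6} to the case $\mathbb{T}=\mathbb{R}$, carrying out the routine time-scale reductions and then an algebraic rewrite to match the stated form of the $y$ and $z$ components.

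First I would record the simplifications that the continuous setting induces on the objects appearing in Theorem~\ref{thm:6}. When $\mathbb{T}=\mathbb{R}$ we have $\sigma(t)=t$ and $\mu(t)\equiv 0$, so \eqref{dynamic:sir} collapses to \eqref{sir:continuous2}. On the algebra of regressive functions, $\mu\equiv 0$ forces $\ominus p=-p$ and $p\oplus q=p+q$, while the time-scale exponential reduces to the classical one, $e_p(t,t_0)=\exp\{\int_{t_0}^t p(\tau)\,d\tau\}$. Hence $f$ and $g$ from \eqref{eq:f:g} become
\begin{equation*}
f(t) = (c-b)(t),\qquad g(t)=\frac{b(t)\kappa}{\exp\{\int_{t_0}^t (c-b)(\tau)\,d\tau\}+\kappa}.
\end{equation*}
At this point, the hypothesis $f,g\in\mathcal{R}$ demanded by Theorem~\ref{thm:6} is automatic since every rd-continuous (here simply continuous) function is trivially regressive when $\mu\equiv 0$.

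Next I would plug $f$ and $g$ into \eqref{dynamic:solution}. The $x$-component is immediate:
\begin{equation*}
x(t)=\exp\!\Bigl\{-\!\int_{t_0}^t g(s)\,ds\Bigr\}\,x_0 = x_0 \exp\!\Bigl\{-\kappa\!\int_{t_0}^t b(s)\bigl(e^{\int_{t_0}^s (c-b)(\tau)\,d\tau}+\kappa\bigr)^{-1} ds\Bigr\},
\end{equation*}
exactly as in \eqref{exact:cont}. The $z$-component requires nothing beyond substitution into $z(t)=N-e_{\ominus g}(t,t_0)[x_0+e_{\ominus f}(t,t_0)y_0]$, using $e_{\ominus f}(t,t_0)=\exp\{\int_{t_0}^t (b-c)(\tau)\,d\tau\}$.

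The one step that needs a small algebraic rewrite is the $y$-component, because the target expression in \eqref{exact:cont} is not literally $e_{\ominus(f\oplus g)}(t,t_0)y_0$ but an equivalent form. I would start from
\begin{equation*}
y(t)=y_0\exp\!\Bigl\{-\!\int_{t_0}^t\bigl[(c-b)(s)+g(s)\bigr]\,ds\Bigr\},
\end{equation*}
and use the identity
\begin{equation*}
b(s)-g(s)=b(s)\Bigl[1-\frac{\kappa}{e^{\int_{t_0}^s (c-b)\,d\tau}+\kappa}\Bigr]
=\frac{b(s)\,e^{\int_{t_0}^s (c-b)\,d\tau}}{e^{\int_{t_0}^s(c-b)\,d\tau}+\kappa}
=\frac{b(s)}{1+\kappa\,e^{\int_{t_0}^s(b-c)\,d\tau}}.
\end{equation*}
Writing $-(c-b)(s)-g(s)=\bigl(b(s)-g(s)\bigr)-c(s)$ and substituting this identity yields precisely the formula for $y$ stated in \eqref{exact:cont}.

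I do not foresee a genuine obstacle: the only thing to watch is to apply the algebraic rearrangement for $y$ consistently (it is the same manipulation that turns $g$ into its two equivalent forms already used in the proof of Theorem~\ref{thm:6}). Uniqueness of the solution transfers from Theorem~\ref{thm:6} as well, so no separate argument is needed.
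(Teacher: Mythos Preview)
Your proposal is correct and follows precisely the route the paper intends: the corollary is stated as a direct consequence of Theorem~\ref{thm:6}, and your specialization to $\mathbb{T}=\mathbb{R}$ (with $\mu\equiv 0$, $\ominus p=-p$, $p\oplus q=p+q$, and $e_p(t,t_0)=\exp\{\int_{t_0}^t p\}$) together with the algebraic rewrite for the $y$-component is exactly what is required. The paper gives no separate proof, so your level of detail already exceeds what appears there.
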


\begin{remark}
For constants $b$ and $c$, \eqref{exact:cont} reduces to the classical
solution presented in \cite{cont:sol}. 
\end{remark}

\begin{example}
\label{ex:03}
To illustrate Theorem~\ref{thm:9}, consider $\mathbb{T} = \mathbb{R}$ and 
$b,\,c \in \mathbb{R}^+_0$. Fig.~\ref{infected} presents the solution of $y(t)$ 
while Fig.~\ref{r0} compares the value of the basic reproduction number \eqref{eq:brn}
$\mathcal{R}_0 = 1.5$ with $\frac{x(t)+y(t)}{x^{\sigma}(t)}$, in accordance
with Theorem~\ref{thm:9}. 
\end{example}

\begin{figure}[ht]
\centering
\begin{subfigure}[b]{0.7\textwidth}
\includegraphics[width=1\linewidth]{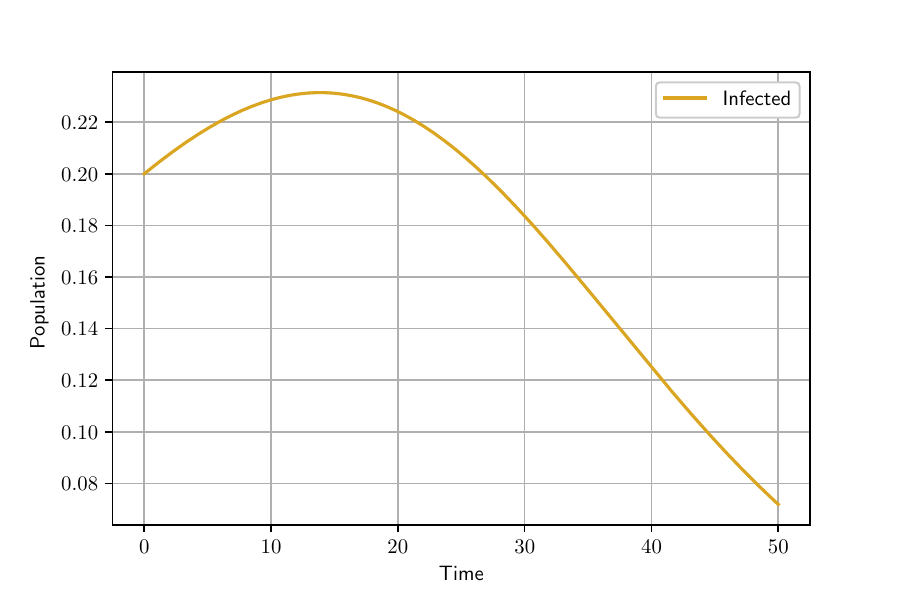}
\caption{Infected population}
\label{infected} 
\end{subfigure}
\begin{subfigure}[b]{0.7\textwidth}
\includegraphics[width=1\linewidth]{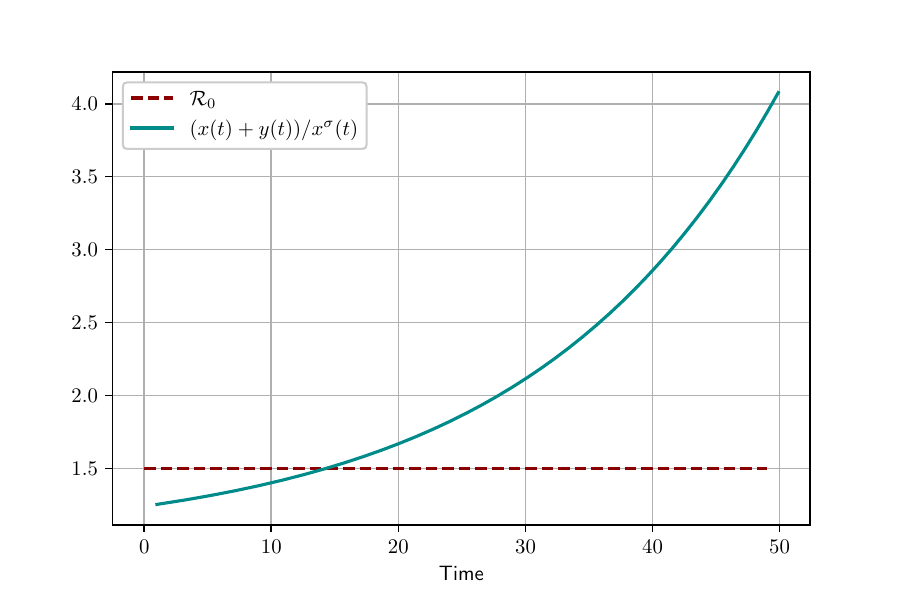}
\caption{$\mathcal{R}_0 = 1.5$ and $\frac{x(t)+y(t)}{x^{\sigma}(t)}$}
\label{r0}
\end{subfigure}
\caption{Illustration of Theorem~\ref{thm:9}: 
dynamics of the infected population $y(t)$ of Example~\ref{ex:03}
with $x_0 = 0.8$, $y_0 = 0.2$, $b = 0.15$ and $c = 0.1$.}
\end{figure}


\section{Conclusion}
\label{sec:conc}

The SIR model is the most classical and fundamental model for understanding disease 
dynamics. It has been extensively studied and applied to describe the spread of infectious 
diseases in populations, see e.g. \cite{MR4368314,s11071-024-09710-9,s11071-025-11006-5}.
Our novelty here lies in extending the classical SIR model to the theory of time scales in a 
consistent way. 

We introduced a new dynamic epidemic model on arbitrary time scales, 
based on classical Bailey's SIR model, and derive its exact solution. 
In contrast with available results in the literature, our model has
always a non-negative solution, which is a crucial aspect
from the applications point of view. We analyze the asymptotic behavior of the 
susceptible, infected and removed individuals, proving model's 
consistency for any arbitrary time scale. Along the manuscript, 
we present examples of applications in discrete, quantum, 
and continuous time domains. The new model proposed in our work guarantees the
non-negativity of solutions, in contrast with \cite{sir:ts}, which does not. 
Our results add biological relevance to time-scale models.

The use of time scales provides a unifying mathematical framework that 
integrates discrete, continuous, and more general types of time domains. This flexibility is 
especially valuable in biological and epidemiological modeling, where data and processes often 
occur at irregular or mixed time intervals. Thus, time scales allow for more accurate and 
versatile modeling of biological systems by capturing dynamics that are not purely continuous 
or discrete, but may combine both aspects.
The main advantage of our model is precisely this adaptability: by formulating the SIR 
dynamics on arbitrary time scales and understanding its behavior, we can better represent 
real-world scenarios where disease transmission and population changes happen at non-uniform 
time scales. This strongly enhances SIR-type model's applicability and relevance.

Much remains to be done. For example, the inclusion of parameters, 
such as natality and mortality, increases the complexity 
of the analysis of the system and, to the best of our knowledge, 
finding an exact solution for such models remains an open problem. 
Here we considered the basic SIR framework, which already captures 
essential epidemic dynamics and allows for analytical 
tractability. For future work one may 
address more realistic models, e.g. 
with natality and mortality parameters. 
This will allow for a more accurate and comprehensive 
understanding of disease dynamics and potential control strategies.
Another interesting line of research consists to generalize our results
for fractional systems on time scales \cite{Hattaf2024,Torres,TM2}.


\backmatter

\bmhead{Author contributions}

Márcia Lemos-Silva: Conceptualization, Formal analysis, Investigation, Methodology, 
Software, Validation, Visualization, Writing -- original draft, Writing -- review \& editing. 
Sandra Vaz: Conceptualization, Formal analysis, Investigation, Methodology,
Supervision, Validation, Writing -- original draft, Writing -- review \& editing. 
Delﬁm F.M. Torres: Conceptualization, Formal analysis, Investigation, Methodology, 
Project administration, Supervision, Validation, Writing -- original draft, Writing -- review \& editing.

\bmhead{Funding}

Lemos-Silva and Torres are supported by CIDMA 
and Vaz by CMA-UBI under 
FCT (\url{https://ror.org/00snfqn58})  
Multi-Annual Financing Program for R\&D Units.
Lemos-Silva is also supported by the FCT PhD grant
UI/BD/154853/2023 (\url{https://doi.org/10.54499/UI/BD/154853/2023})
and by an ERASMUS+ fellowship; 
Torres via project CoSysM3, 
Reference 2022.03091.PTDC 
(\url{https://doi.org/10.54499/2022.03091.PTDC}).

\bmhead{Availability of data and materials} 

No datasets were generated or analyzed during the current study.

\subsection*{Declarations}

\bmhead{Conflict of interest}

The authors declare no competing interests.



\end{document}